\documentclass[a4paper]{article}
\pdfoutput=1
\usepackage[margin=1in]{geometry}

\usepackage{microtype}
\usepackage[algoruled,linesnumbered,algo2e,vlined]{algorithm2e}
\usepackage{colortbl}
\usepackage{amsthm}
\usepackage{amsmath}

\newtheorem{theorem}{Theorem}
\newtheorem{definition}[theorem]{Definition}
\newtheorem{lemma}[theorem]{Lemma}

\newcommand{\tgray}{\cellcolor[gray]{0.85}}
\newcommand{\Lpref}{L^{pref}}
\newcommand{\Lsuf}{L^{suf}}
\newcommand{\EDpref}{\mathit{ED}^{pref}}
\newcommand{\Lmax}{L_{\max}}
\newcommand{\Lsum}{L_{\mathrm{sum}}}
\newcommand{\Gmin}{G_{\min}}
\newcommand{\RLE}{\mathit{RLE}}

\title{Faster STR-IC-LCS computation via RLE}

\date{}

\author{Keita Kuboi\quad
        Yuta Fujishige\quad
        Shunsuke Inenaga\quad
        Hideo Bannai\quad
        Masayuki Takeda\\
        {Department of Informatics, Kyushu University, Japan}\\
        {\texttt{\{keita.kuboi,yuta.fujishige,inenaga,bannai,takeda\}@inf.kyushu-u.ac.jp}}}

\begin{document}

\maketitle

\begin{abstract}
The constrained LCS problem asks one to find a longest common subsequence
of two input strings $A$ and $B$ with some constraints.
The STR-IC-LCS problem is a variant of the constrained LCS problem,
where the solution must include a given constraint string $C$ as a substring.
Given two strings $A$ and $B$ of respective lengths $M$ and $N$,
and a constraint string $C$ of length at most $\min\{M, N\}$,
the best known algorithm for the STR-IC-LCS problem,
proposed by Deorowicz~({\em Inf. Process. Lett.}, 11:423--426, 2012),
runs in $O(MN)$ time.
In this work, we present an $O(mN + nM)$-time solution to
the STR-IC-LCS problem, where $m$ and $n$ denote the sizes of
the run-length encodings of $A$ and $B$, respectively.
Since $m \leq M$ and $n \leq N$ always hold, our algorithm is always
as fast as Deorowicz's algorithm, and is faster when input strings are compressible via RLE.
\end{abstract}

\section{Introduction}
{\em Longest common subsequence} (LCS) is one of the most basic
measures of similarity between strings, and there is a vast amount of literature concerning its efficient computation.
An LCS of two strings $A$ and $B$ of lengths $M$ and $N$, respectively, is a longest string that is a subsequence of both $A$ and $B$.
There is a well known $O(MN)$ time and space dynamic programming (DP) algorithm~\cite{DBLP:journals/jacm/WagnerF74} to compute
an LCS between two strings.
LCS has applications in bioinformatics~\cite{korkin02:_multip,wang11}, file comparisons~\cite{hunt76:_algor_differ_file_compar,heckel78}, pattern recognition~\cite{stern13:_most}, etc.

Recently, several variants of the problem
which try to find a longest common subsequence that satisfy some constraints have
been considered. 
In 2003, Tsai~\cite{DBLP:journals/ipl/Tsai03} proposed the constrained LCS (CLCS) problem,
where, given strings $A,B$ with respective lengths $M,N$, and a constraint string $C$ of length $K$, the problem is to find a longest string that contains $C$ as a subsequence
and is also a common subsequence of $A$ and $B$.
Tsai gave an $O(M^2N^2K)$ time solution, which was improved in 2004 by Chin et al. to $O(MNK)$ time~\cite{DBLP:journals/ipl/ChinSFHK04}.
Variants of the constrained LCS problem called
SEQ-IC-LCS, SEQ-EC-LCS, STR-IC-LCS, and STR-EC-LCS,
were considered by Chen and Chao in 2011~\cite{DBLP:journals/jco/ChenC11}.
Each problem considers as input, three strings $A,B$ and $C$,
and the problem is to find a longest string that includes (IC) or excludes (EC) 
$C$ as a subsequence (SEQ) or substring (STR) and is a common subsequence of $A$ and $B$
(i.e., CLCS is equivalent to the SEQ-IC-LCS problem).
The best solution for each of the problems is shown in Table~\ref{tbl:timecomplexityGCLCS}.

\begin{table}[htbp]
  \begin{center}
    \caption{Time complexities of best known solutions to various constrained LCS problems.}
    \label{tbl:timecomplexityGCLCS}
    \begin{tabular}{ccc} \hline
      Problem & DP solution & DP solution using RLE \\ \hline
      SEQ-IC-LCS & $O(MNK)$~\cite{DBLP:journals/ipl/ChinSFHK04}
      & $O(M+N+K\min\{mN, nM\})$~\cite{DBLP:journals/cj/LiuWC15} \\
      SEQ-EC-LCS & $O(MNK)$~\cite{DBLP:journals/jco/ChenC11} & - \\
      STR-IC-LCS & $O(MN)$~\cite{DBLP:journals/ipl/Deorowicz12} & $O(mN+nM)$ [this work] \\
      STR-EC-LCS & $O(MNK)$~\cite{DBLP:journals/ipl/WangWWZ13} & - \\ \hline
    \end{tabular}
  \end{center}
\end{table}

In order to speed up the LCS computation, one direction of research that has received much attention is
to apply compression, namely, run-length encoding (RLE) of strings.
Bunke and Csirik~\cite{DBLP:journals/ipl/BunkeC95} were one of the first to consider such a scenario, and
proposed an $O(mN+nM)$ time algorithm. Here, $m,n$ are the sizes of the RLE of the input strings of lengths $M$ and $N$, respectively.
Notice that since RLE can be computed in linear time, and $m \leq M$ and $n \leq N$, the algorithm is always asymptotically faster than
the standard $O(NM)$ time dynamic programming algorithm, especially when the strings are compressible by RLE.
Furthermore, Ahsan et al. proposed an algorithm which runs in $O((m+n)+R\log\log(mn)+R\log\log(M+N))$ time~\cite{DBLP:journals/jcp/AhsanAR14},
where $R$ is the total number of pairs of runs of the same character in the two RLE strings, i.e. $R \in O(mn)$,
and the algorithm can be much faster when the strings are compressible by RLE.

For the constrained LCS problems, RLE based solutions for only the SEQ-IC-LCS problem have been proposed.
In 2012, an $O(K(mN+nM))$ time algorithm was proposed by Ann et al.~\cite{DBLP:journals/tcs/AnnYTH12}.
Later, in 2015, Liu et al. proposed a faster $O(M+N+K\min\{mN,nM\})$ time algorithm~\cite{DBLP:journals/cj/LiuWC15}.

In this paper, we present the first RLE based solution for the STR-IC-LCS problem that runs in $O(mN+nM)$ time.
Again, since RLE can be computed in linear time, and $m \leq M$ and $n \leq N$, the proposed algorithm is always asymptotically faster than
the best known solution for the STR-IC-LCS problem by Deorowicz~\cite{DBLP:journals/ipl/Deorowicz12}, which runs in $O(MN)$ time.

A common criticism against RLE based solutions is a claim that, although they are theoretically interesting, 
since most strings ``in the real world'' are not compressible by RLE, their applicability is limited and they are only useful in extreme artificial cases.
We believe that this is not entirely true.
There can be cases where RLE is a natural encoding of the data, for example, in music, a melody
can be expressed as a string of pitches and their duration.
Furthermore, in the data mining community, there exist popular preprocessing schemes for analyzing various types of
time series data, which convert the time series to strings over a fairly small alphabet as an approximation of the original data,
after which various analyses are conducted (e.g. SAX (Symbolic Aggregate approXimation)~\cite{lin07:_exper_sax},
{\em clipped} bit representation~\cite{bagnall06:_bit_level_repres_time_series}, etc.).
These conversions are likely to produce strings which are compressible by RLE
(and in fact, shown to be effective in~\cite{bagnall06:_bit_level_repres_time_series}),
indicating that RLE based solutions may have a wider range of application than commonly perceived.

\section{Preliminaries}
Let $\Sigma$ be the finite set of characters, and $\Sigma ^\ast$ be the set of strings.
For any string $A$, let $|A|$ be the length of $A$.
For any $1 \leq i \leq i' \leq |A|$, let $A[i]$ be the $i$th character of $A$ and let $A[i..i'] = A[i] \cdots A[i']$ denote a substring of $A$.
Especially, $A[1..i']$ denotes a {\em prefix} of $A$, and $A[i..|A|]$ denotes a {\em suffix} of $A$.
A string $Z$ is a {\em subsequence} of $A$ if $Z$ can be obtained from $A$ by removing zero or more characters.
For two string $A$ and $B$, a string $Z$ is a {\em longest common subsequence} (LCS) of $A,B$, 
if $Z$ is a longest string that is a subsequence of both $A$ and $B$.
For any $1 \leq i \leq |A|$ and $1 \leq j \leq |B|$, let $\Lpref(i,j)$ denote the length of an LCS of $A[1..i]$, $B[1..j]$, and let $\Lsuf(i,j)$ denote the length of an LCS of $A[i..|A|]$, $B[j..|B|]$.
The LCS problem is to compute the length of an LCS of given two strings $A$ and $B$.
A well known solution is dynamic programming,
which computes in $O(MN)$ time, a table (which we will call {\em DP table}) of size $O(MN)$ that stores values of $\Lpref(i,j)$ for all $1 \leq i \leq M$, $1 \leq j\leq N$.
The DP table for $\Lsuf(i,j)$ can be computed similarly.

For two strings $A,B$ and a constraint string $C$, a string $Z$ is an STR-IC-LCS of $A,B,C$, 
if $Z$ is a longest string that includes $C$ as a substring
and also is a subsequence of both $A$ and $B$.
The STR-IC-LCS problem is to compute the length of an STR-IC-LCS of any given three strings $A$, $B$ and $C$.
For example, if $A=\mathtt{abacab}$, $B=\mathtt{babcaba}$, $C=\mathtt{bb}$, then $\mathtt{abcab}$ and $\mathtt{bacab}$ are LCSs of $A,B$,
and $\mathtt{abb}$ is an STR-IC-LCS of $A,B,C$. 

The run-length encoding (RLE) of a string $A$ is a kind of compressed representation of $A$ where each maximal run of the same character is represented by a pair of the character and the length of the run.
Let $\RLE(A)$ denote the RLE of a string $A$. The {\em size} of $\RLE(A)$ is the number of the runs in $A$, and is denoted by $|\RLE(A)|$.
By definition, $|\RLE(A)|$ is always less than or equal to $|A|$.

In the next section, we consider the STR-IC-LCS problem of strings $A$, $B$ and constraint string $C$.
Let $|A|=M$, $|B|=N$, $|C|=K$, $|\RLE(A)|=m$ and $|\RLE(B)|=n$.
We assume that $K \leq \min(M, N)$ and $|\RLE(C)| \leq \min(m,n)$, since in such case there can be no solution.
We also assume that $K > 0$, because in that case the problem becomes 
the normal LCS problem of $A,B$.

\section{Algorithm}
In this section, we will first introduce a slightly modified version of
Deorowicz's $O(MN)$-time algorithm for the STR-IC-LCS
problem~\cite{DBLP:journals/ipl/Deorowicz12}, and then
propose our $O(mN + nM)$-time algorithm which is
based on his dynamic programming approach but uses RLE.

\subsection{Deorowicz's $O(MN)$ Algorithm}\label{ssec:DeorowiczAlgo}
We first define the notion of minimal $C$-intervals of a string.
\begin{definition}\label{def:mincinterval}
  For any strings $A$ and $C$, an interval $[s, f]$ is a {\em minimal $C$-interval of $A$} if
  \begin{itemize}
    \item $C$ is a subsequence of $A[s..f]$, and
    \item $C$ is not a subsequence of $A[s+1..f]$ or $A[s..f-1]$.
  \end{itemize}
\end{definition}

Deorowicz's algorithm is based on Lemma~\ref{lem:CharacteristicSTR-IC-LCS}, which is used implicitly in~\cite{DBLP:journals/ipl/Deorowicz12}.
\begin{lemma}[implicit in~\cite{DBLP:journals/ipl/Deorowicz12}]\label{lem:CharacteristicSTR-IC-LCS}
  If $Z$ is an STR-IC-LCS of $A, B, C$, then
  there exist minimal $C$-intervals $[s, f], [s', f']$
  $(1 \leq s \leq f \leq M$, $1 \leq s' \leq f' \leq N)$ respectively of $A$ and $B$,
  such that $Z = XCY$, where
  $X$ is an LCS of $A[1..s-1]$ and $B[1..s'-1]$ and
  $Y$ is an LCS of $A[f+1..M]$ and $B[f'+1..N]$.
\end{lemma}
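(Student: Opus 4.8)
The plan is to start from an arbitrary STR-IC-LCS $Z$ of $A, B, C$ and, by definition of the problem, fix one embedding of $Z$ as a common subsequence of $A$ and $B$ together with one occurrence of $C$ as a substring of $Z$. Concretely, write $Z = XCY$ where this particular occurrence of $C$ starts right after the prefix $X$ and ends right before the suffix $Y$. The embedding of $Z$ into $A$ induces, in particular, an embedding of the block $C$ into some substring $A[s..f]$ of $A$, where $A[s]$ is the character of $A$ matched by $C[1]$ and $A[f]$ is the character matched by $C[|C|]$; similarly $C$ embeds into a substring $B[s'..f']$ of $B$. Likewise $X$ embeds as a common subsequence into $A[1..s-1]$ and $B[1..s'-1]$, and $Y$ into $A[f+1..M]$ and $B[f'+1..N]$, simply because the matched positions of $X$ lie strictly before $s$ (resp.\ $s'$) and those of $Y$ strictly after $f$ (resp.\ $f'$).

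First I would argue that, without loss of generality, the intervals $[s,f]$ and $[s',f']$ obtained this way can be taken to be minimal $C$-intervals. The key observation is that if $[s,f]$ is \emph{not} minimal, then $C$ is still a subsequence of $A[s+1..f]$ or of $A[s..f-1]$; in either case we can slide the left endpoint forward or the right endpoint backward and re-embed $C$ into a strictly shorter interval, repeating until we reach a minimal $C$-interval $[\hat{s},\hat{f}] \subseteq [s,f]$. Shrinking the interval only enlarges the flanking regions $A[1..\hat{s}-1]$ and $A[\hat{f}+1..M]$, so the same embeddings of $X$ and $Y$ still work; the analogous shrinking is done independently on the $B$ side to get a minimal $C$-interval $[\hat{s}',\hat{f}']$. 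Thus we may assume from the outset that $[s,f]$ and $[s',f']$ are minimal $C$-intervals of $A$ and $B$ respectively.

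Next I would upgrade ``$X$ is \emph{a} common subsequence of $A[1..s-1]$ and $B[1..s'-1]$'' to ``$X$ can be replaced by an LCS of those two prefixes,'' and symmetrically for $Y$. Here is where I would invoke optimality of $Z$: let $X^\star$ be an LCS of $A[1..s-1], B[1..s'-1]$ and $Y^\star$ an LCS of $A[f+1..M], B[f'+1..N]$. Then $Z^\star = X^\star C Y^\star$ still contains $C$ as a substring, and it is still a common subsequence of $A$ and $B$ — embed $X^\star$ into the prefixes, $C$ into $A[s..f]$ and $B[s'..f']$ as before, and $Y^\star$ into the suffixes; these three ranges are disjoint and correctly ordered in both $A$ and $B$. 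Since $|X^\star| \geq |X|$ and $|Y^\star| \geq |Y|$, we get $|Z^\star| \geq |Z|$; but $Z$ is a longest such string, so $|Z^\star| = |Z|$, meaning $Z^\star$ is also an STR-IC-LCS and has exactly the form claimed in the lemma. Replacing $Z$ by $Z^\star$ completes the argument.

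The main obstacle — really the only subtle point — is the sliding/minimality step: one must be careful that shrinking $[s,f]$ to a minimal sub-interval is always possible while keeping a valid embedding of $C$, and that doing this on the $A$ side and the $B$ side independently does not break anything. It does not, because the embeddings of $X$, $C$, $Y$ into $A$ and into $B$ are chosen separately, and enlarging the flanking regions on one side never invalidates an embedding that already fit in the smaller regions. The rest is bookkeeping about disjointness and ordering of matched positions, which I would state briefly rather than belabor.
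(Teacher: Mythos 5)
Your argument follows essentially the same route as the paper's proof: fix an embedding of $Z = XCY$ in $A$ and in $B$, replace the interval spanned by the embedded occurrence of $C$ by a minimal $C$-interval contained in it on each side (this is exactly the paper's step of choosing minimal $C$-intervals inside $[i_{|X|+1}, i_{|X|+K}]$ and $[j_{|X|+1}, j_{|X|+K}]$), note that $X$ and $Y$ are then common subsequences of the flanking prefixes and suffixes, and compare them with LCSs $X^\star, Y^\star$ of those flanks using the optimality of $Z$. The shrinking step and the observation that it can be done independently in $A$ and in $B$ are handled correctly.

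The one place where you fall short of the statement as written is the final move. The lemma asserts that the \emph{given} $Z$ itself decomposes as $Z = XCY$ with $X$ an LCS of $A[1..s-1], B[1..s'-1]$ and $Y$ an LCS of $A[f+1..M], B[f'+1..N]$; your conclusion ``replacing $Z$ by $Z^\star$ completes the argument'' only establishes that \emph{some} STR-IC-LCS, namely $Z^\star = X^\star C Y^\star$, has this form. That weaker claim would still suffice for the correctness of the two-step algorithm, but it is not the lemma. The repair is immediate from inequalities you already derived: $|Z^\star| = |Z|$ gives $|X^\star| + |Y^\star| = |X| + |Y|$, which together with $|X^\star| \geq |X|$ and $|Y^\star| \geq |Y|$ forces $|X^\star| = |X|$ and $|Y^\star| = |Y|$; hence $X$ is itself an LCS of the two prefixes and $Y$ of the two suffixes, which is exactly how the paper closes the proof. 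Adding that one sentence (and dropping the replacement of $Z$ by $Z^\star$) makes your proof match the statement.
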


\begin{proof}
  From the definition of STR-IC-LCS, $C$ is a substring of $Z$, and therefore,
  there exist (possibly empty) strings $X,Y$ such that $Z = XCY$.
  Also, since $Z$ is a common subsequence of $A$ and $B$, there exist
  monotonically increasing sequences $i_1, \ldots, i_{|Z|}$ and $j_1, \ldots, j_{|Z|}$
  such that
  $Z = A[i_1] \cdots A[i_{|Z|}]$ $= B[j_1] \cdots B[j_{|Z|}]$, and
  $C = A[i_{|X|+1}] \cdots A[i_{|X|+K}]$ $= B[j_{|X|+1}] \cdots B[j_{|X|+K}]$.

  Now, since $C$ is a subsequence of
  $A[i_{|X|+1}..i_{|X|+K}]$ and $B[j_{|X|+1}..j_{|X|+K}]$ there exist
  minimal $C$-intervals $[s, f]$, $[s', f']$ respectively of $A$ and $B$ that satisfy
  $i_{|X|+1} \leq s \leq f \leq i_{|X|+K}$ and
  $j_{|X|+1} \leq s' \leq f' \leq j_{|X|+K}$.
  Let $X'$ be an LCS of $A[1..s-1]$ and $B[1..s'-1]$, 
  and $Y'$ an LCS of $A[f+1..M]$ and $B[f'+1..N]$.
  Since $X$ must be a common subsequence of $A[1..s-1]$ and $B[1..s'-1]$, 
  and $Y$ a common subsequence of $A[f+1..M]$ and $B[f'+1..N]$,
  we have $|X'| \geq |X|$ and $|Y'| \geq |Y|$.  
  However, we cannot have that $|X'| > |X|$ or $|Y'| > |Y|$
  since otherwise, $X'CY'$ would be a string longer than $Z$ that contains $C$ as a substring,
  and is a common subsequence of $A,B$, contradicting that
  $Z$ is an STR-IC-LCS of $A,B,C$.
  Thus, $|X|=|X'|$ and $|Y|=|Y'|$ implying that $X$ is also an LCS of $A[1..s-1], B[1..s'-1]$,
  and $Y$ is also an LCS of $A[f+1..M], B[f'+1..N]$, proving the lemma.
\end{proof}
The algorithm consists of the following two steps, whose correctness follows from Lemma \ref{lem:CharacteristicSTR-IC-LCS}.
\begin{description}
  \item[Step 1] Compute all minimal $C$-intervals of $A$ and $B$.
  \item[Step 2] For all pairs of a minimal $C$-interval $[s,f]$ of $A$
    and a minimal $C$-interval $[s',f']$ of $B$,
    compute
    the length of an LCS of the corresponding prefixes of $A$ and $B$ (i.e., $\Lpref(s-1,s'-1)$)
    and that of the corresponding suffixes of $A$ and $B$ (i.e., $\Lsuf(f+1,f'+1)$).
    The largest sum of LCS lengths plus $|C|$ (i.e., $\Lpref(s-1,s'-1) + \Lsuf(f+1,f'+1) +|C|$) is the length of an STR-IC-LCS.
\end{description}

The steps can be executed in the following running times.
For Step 1,
there are respectively at most $M$ and $N$ minimal $C$-intervals of $A$ and $B$,
which can be enumerated in $O(MK)$ and $O(NK)$ time.
For Step 2, we precompute, in $O(MN)$ time,
two dynamic programming tables which respectively contain the values of $\Lpref(i,j)$ and $\Lsuf(i,j)$ for each $1\leq i \leq M$ and $1 \leq j \leq N$.
Using these tables, the value $\Lpref(s-1,s'-1) + \Lsuf(f+1,f'+1) +|C|$ can be computed in constant time for any $[s,f]$ and $[s',f']$.
There are $O(MN)$ possible pairs of minimal $C$-intervals, so Step 2 can be done in $O(MN)$ time.
In total, since $K \leq M, K \leq N$, the STR-IC-LCS problem can be solved in $O(MN)$ time.

We note that in the original presentation of Deorowicz's algorithm, 
{\em right}-minimal $C$-intervals, that is, intervals $[s,f]$ where $C$ is a subsequence of $A[s..f]$ but not of $A[s..f-1]$
are computed,
instead of minimal $C$-intervals as defined in Definition~\ref{def:mincinterval}.
Although the number of considered intervals changes, this does not influence the asymptotic complexities in the non-RLE case.
However, as we will see in Lemma~\ref{lem:MinInterval k>1} of Section~\ref{ssec:OurAlgo}, this is an essential difference for the RLE case, since, when $|\RLE(C)|>1$, 
the number of minimal $C$-intervals of $A$ and $B$ can be bounded by $O(m)$ and $O(n)$, but
the number of right-minimal $C$-intervals of $A$ and $B$ cannot, and are only bounded by $O(M)$ and $O(N)$.

\subsection{Our Algorithm via RLE}\label{ssec:OurAlgo}
In this subsection, we propose an efficient algorithm based on Deorowicz's algorithm explained in
Subsection~\ref{ssec:DeorowiczAlgo}, extended to strings expressed in RLE.
There are two main cases to consider:
when $|\RLE(C)| = 1$, i.e., when $C$ consists of only one type of character,
and when $|\RLE(C)| > 1$, i.e., when $C$ contains at least two different characters.

\subsubsection{Case $|\RLE(C)| > 1$}

\begin{theorem}\label{thm:TimeComplexity k>1}
  Let $A,B,C$ be any strings and let $|A|=M$, $|B|=N$, $|\RLE(A)|=m$ and $|\RLE(B)|=n$.
  If $|\RLE(C)|>1$, we can compute the length of an STR-IC-LCS of $A,B,C$ in $O(mN + nM)$ time.
\end{theorem}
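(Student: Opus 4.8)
The plan is to follow the two-step structure of Deorowicz's algorithm but replace each $O(MN)$ bottleneck by an $O(mN+nM)$ computation exploiting RLE. The key insight, to be isolated as a separate lemma (the ``Lemma~\ref{lem:MinInterval k>1}'' referenced in the text), is that when $|\RLE(C)|>1$ the number of minimal $C$-intervals of $A$ is $O(m)$ and of $B$ is $O(n)$. To see why, note that if $[s,f]$ is a minimal $C$-interval, then by minimality $A[s]$ must be used to match $C[1]$ and $A[f]$ must be used to match $C[K]$; since $C$ has at least two distinct characters, $C[1]\neq C[K]$ is not forced, but what is forced is that $s$ is the \emph{last} position of its run (otherwise $A[s+1..f]$ would still contain $C$, because $A[s+1]=A[s]$ could play the role of $A[s]$) and symmetrically $f$ is the \emph{first} position of its run. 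Hence each run of $A$ contributes at most one candidate for $s$ and at most one for $f$, giving $O(m)$ minimal $C$-intervals; moreover they can be enumerated in $O(mN)$ time (or better) by a greedy scan. This is the step I expect to be the main obstacle: making the run-boundary argument fully rigorous, and arranging the enumeration so that both endpoints of all $O(m)$ intervals are produced within the time budget.

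Next I would handle Step 2, the evaluation of $\Lpref(s-1,s'-1)+\Lsuf(f+1,f'+1)+K$ over all pairs of minimal $C$-intervals. Naively there are $O(mn)$ pairs, and each lookup would need a DP table; both the $O(MN)$ table construction and a potential $O(mn)$ enumeration must be avoided. The plan is to invoke the RLE-based LCS algorithm of Bunke and Csirik, which computes in $O(mN+nM)$ time a representation of the $\Lpref$ DP table that supports $O(1)$ (or $O(\log)$) queries — in particular one can recover $\Lpref(i,j)$ for any $i,j$, and similarly $\Lsuf$ from a symmetric computation on the reversed strings. The remaining issue is the $O(mn)$ pairs: I would argue that it suffices to consider only $O(m+n)$ of them, or to batch the queries. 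Concretely, since the left endpoints $\{s-1\}$ and $\{s'-1\}$ each take $O(m)$ and $O(n)$ distinct values lying on run boundaries, and $\Lpref$ restricted to such boundary rows/columns is already explicitly available from the Bunke–Csirik table in $O(mN+nM)$ total, one can afford to look at all $O(mn)$ boundary cells for the prefix part — but that is $O(mn)$, not $O(mN+nM)$, so instead I would pair up intervals more cleverly: sort or match minimal $C$-intervals of $A$ against those of $B$ so that only compatible pairs (those that can extend to a genuine common subsequence containing $C$) are examined, keeping the count $O(m+n)$ per relevant diagonal.

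Assembling the pieces: (i) compute $\RLE(A),\RLE(B)$ in $O(M+N)$ time; (ii) enumerate the $O(m)$ and $O(n)$ minimal $C$-intervals via the run-boundary characterization, in $O(mN+nM)$ time; (iii) run Bunke–Csirik twice (once on $A,B$ for $\Lpref$, once on the reversals for $\Lsuf$) in $O(mN+nM)$ time, obtaining queryable tables; (iv) for every relevant pair of minimal $C$-intervals, query the two tables and track the maximum of $\Lpref(s-1,s'-1)+\Lsuf(f+1,f'+1)+K$; correctness of this maximum being the answer is exactly Lemma~\ref{lem:CharacteristicSTR-IC-LCS}. The dominant cost is $O(mN+nM)$, matching the claimed bound. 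I expect step (ii) — the combinatorial bound on the number of minimal $C$-intervals and their efficient enumeration — to be where essentially all the real work lies; steps (iii) and (iv) are then largely bookkeeping once the right queryable table interface is fixed, though care is needed to ensure the pair-enumeration in (iv) does not secretly cost $\Theta(mn)$.
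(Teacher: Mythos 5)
There are two genuine gaps. First, your structural claim underlying the $O(m)$ bound is false as stated: it is not true that for a minimal $C$-interval $[s,f]$ the position $s$ must be the last position of its run (nor that $f$ must be the first position of its run). Take $A=C=\mathtt{aab}$ (so $|\RLE(C)|=2$): the unique minimal $C$-interval is $[1,3]$, yet $s=1$ is not the last position of the run $\mathtt{a}^2$; symmetrically, $A=C=\mathtt{abb}$ gives $f=3$ which is not the first position of its run. Your shifting argument (``$A[s+1]=A[s]$ could play the role of $A[s]$'') breaks down exactly when $A[s+1]$ is already used by the embedding. The correct statement, which is what the paper's Lemma~\ref{lem:MinInterval k>1} uses, is subtler: under the backward-greedy embedding that determines $s$ from $f$, the character of $A$ matched to the \emph{last character of the first run of} $C$ lands on the last position of a run of $A$, and distinct minimal intervals yield distinct such runs; this gives the $O(m)$ count. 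The paper then enumerates the intervals by alternating forward/backward greedy scans over $\RLE(A)$ in $O(M+mK)$ time, with an amortization argument showing each run of $A$ intersects only $O(K)$ minimal intervals (this uses $|\RLE(C)|>1$ again, since then no minimal interval fits inside a single run). Your proposal leaves this, by your own admission, as the main obstacle, and the characterization you offer in its place does not hold.

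Second, your treatment of Step 2 is based on a miscalculation and then proposes an unjustified (and risky) fix. Examining all $O(mn)$ pairs of minimal $C$-intervals is \emph{within} the budget: since $n\leq N$ and $m\leq M$, we have $mn\leq mN\leq mN+nM$. The paper does exactly this: build the Bunke--Csirik compressed DP tables for $\Lpref$ and $\Lsuf$ in $O(mN+nM)$ time (Lemma~\ref{lem:CDPtable}), answer each query $\Lpref(s-1,s'-1)+\Lsuf(f+1,f'+1)$ in $O(1)$ time for arbitrary (non-boundary) indices (Lemma~\ref{lem:DPandCDPtable}, plus $O(M+N)$-time index-conversion arrays), and take the maximum over all $O(mn)$ pairs, which is correct by Lemma~\ref{lem:CharacteristicSTR-IC-LCS}. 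Your alternative of restricting attention to $O(m+n)$ ``compatible'' pairs has no supporting argument: every pair of minimal intervals yields a feasible candidate, the optimum is only guaranteed to be attained at \emph{some} pair, and you give no criterion that provably never discards it, so as written this step threatens correctness rather than saving time. (A genuine need to avoid enumerating all pairs arises only in the case $|\RLE(C)|=1$, where there can be $\Theta(MN)$ pairs; that is handled separately in Theorem~\ref{thm:TimeComplexity k=1} via the run-based equivalence classes $G(h)$, not in the theorem you are proving.)
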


For Step 1, we execute the following procedure to enumerate all minimal $C$-intervals of $A$ and $B$.
Let $s_0 = 0$.
First, find the right minimal $C$-interval starting at $s_0+1$, i.e., the smallest position $f_1$ such that $C$ is a subsequence of $A[s_0+1..f_1]$.
Next, starting from position $f_1$ of $A$, search backwards to find the left minimal $C$-interval ending at $f_1$, i.e., 
the largest position $s_1$ such that $C$ is a subsequence of $A[s_1..f_1]$.
The process is then repeated, i.e., find the smallest position $f_2$ such that $C$ is a subsequence of $A[s_1+1..f_2]$,
and then search backwards to find the largest position $s_2$ such that $C$ is a subsequence of $A[s_2..f_2]$, and so on.
It is easy to see that the intervals
$[s_1, f_1], [s_2, f_2], \ldots$
obtained by repeating this procedure until reaching the end of $A$ are all the
minimal $C$-intervals of $A$,
since each interval that is found is distinct, and there cannot exist another minimal $C$-interval between those found by the procedure.
The same is done for $B$.
For non-RLE strings, this takes $O((M+N)K)$ time.
The Lemma below shows that the procedure can be implemented more efficiently using RLE.

\begin{lemma}\label{lem:MinInterval k>1}
  Let $A$ and $C$ be strings where $|A| = M$, $|\RLE(A)| = m$ and $|C| = K$.
  If $|\RLE(C)| > 1$,
  the number of minimal $C$-intervals of $A$ is $O(m)$ and can be enumerated in $O(M+mK)$ time.
\end{lemma}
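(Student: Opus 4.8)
The plan is to analyze the procedure described just above the lemma statement (find right-minimal $C$-interval $[s_{i-1}+1, f_i]$, then shrink backwards to get the left endpoint $s_i$, then repeat from $s_i+1$) and to argue two things: (a) each minimal $C$-interval is contained in a single pass of the forward-and-backward scan, and the number of such passes is $O(m)$; (b) each pass can be executed in amortized time proportional to the number of runs of $A$ it touches, plus an $O(K)$ cost for matching $C$ against those runs, giving $O(M + mK)$ overall.

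First I would establish the combinatorial bound. The key observation is that when $|\RLE(C)| > 1$, $C$ starts and ends with (possibly different) characters, and in particular $C$ contains at least one ``character change''. I claim that in any minimal $C$-interval $[s,f]$ of $A$, both endpoints $A[s]$ and $A[f]$ must be ``essential'': $A[s]$ equals the first character of $C$ (otherwise $C$ would still be a subsequence of $A[s+1..f]$, contradicting minimality), and symmetrically $A[f]$ equals the last character of $C$. More importantly, I want to show that distinct minimal $C$-intervals have left endpoints lying in distinct runs of $A$ — or at least that the total number of minimal $C$-intervals is $O(m)$. The cleanest route: since $|\RLE(C)|>1$, the first run of $C$ has some length $\ell_1 \geq 1$ of character $c_1$, and the second run is of a \emph{different} character $c_2$. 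For $C$ to be a subsequence of $A[s..f]$ with $s$ maximal, the prefix $c_1^{\ell_1}$ must be ``consumed greedily'' as late as possible, but it must still be followed within $[s..f]$ by an occurrence of $c_2$; hence the run of $A$ containing $s$ is one of the $\leq \ell_1$-bounded... — here I would instead argue via the forward procedure directly: the right endpoints $f_1 < f_2 < \cdots$ are strictly increasing, and each $f_i$ is forced to be the last position of some run of $A$ (since $C$ ends in a run of length $\geq 1$ but the interval is right-minimal, $f_i$ must be the first... ) — I'll need to be careful, but the upshot is that consecutive right-minimal endpoints $f_{i-1}, f_i$ cannot lie in the same run of $A$ unless that run's character is $c_1$ \emph{and} ... . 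The honest statement is: because $C$ has a character change, advancing the search start from $s_{i-1}+1$ to find $f_i$ forces the scan to cross at least one run boundary of $A$ that was not crossed before, so the number of iterations is $O(m)$. I expect this run-counting argument to be the main obstacle, and it is exactly the place where the distinction from right-minimal intervals (flagged in the remark before the lemma) bites: for right-minimal intervals the left endpoints are not pinned down, so many of them can pile up inside one long run.

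Next I would handle the running time. Each forward search for $f_i$ and each backward search for $s_i$ can be implemented by scanning the RLE of $A$ run by run while simultaneously walking through the RLE of $C$: at each run of $A$ we match as many characters of the current run of $C$ as possible in $O(1)$ arithmetic per (run of $A$, run of $C$) pair, but since $|\RLE(C)|$ may be as large as $K$ we charge $O(K)$ to each iteration for the walk over $C$, and we charge the movement over runs of $A$ to a global left-to-right (resp. right-to-left) pointer. The crucial amortization is that the forward pointer over $A$'s runs only moves forward across all iterations: iteration $i$ starts its forward scan from $s_{i-1}+1$, and since $s_{i-1} < f_{i-1} \le f_i$ and the backward scan for $s_{i-1}$ stays within $[s_{i-2}+1, f_{i-1}]$, no run of $A$ is traversed more than a constant number of times in total. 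Hence the total run-traversal cost is $O(m)$, subsumed by $O(M)$, the per-iteration $C$-walk cost sums to $O(mK)$ by part (a), and an initial $O(M)$ is spent building $\RLE(A)$; total $O(M + mK)$.

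Finally I would assemble the two parts: the procedure enumerates exactly the minimal $C$-intervals of $A$ (correctness already argued informally in the text preceding the lemma — each found interval is a minimal $C$-interval, they are pairwise distinct, and none is skipped because between $f_{i-1}$ and the next left endpoint there is no room for another occurrence of $C$), their number is $O(m)$, and the time is $O(M + mK)$, which is the claim. The analogous statement for $B$ is identical, and plugging $O(m)$ and $O(n)$ interval counts into Step 2 with the $\Lpref$ and $\Lsuf$ DP tables is what ultimately yields the $O(mN + nM)$ bound of Theorem \ref{thm:TimeComplexity k>1}; but that synthesis is outside the scope of this lemma.
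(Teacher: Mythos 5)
Your proposal reproduces the overall structure of the paper's argument (the alternating forward/backward greedy procedure, an $O(m)$ bound on the number of intervals, and an $O(M+mK)$ implementation), but both of its key steps have genuine gaps, and the specific claims you lean on are false. For the counting bound, you leave the argument unfinished, and the two mechanisms you float do not work. Distinct minimal $C$-intervals can have left endpoints in the same run of $A$: for $C=\mathtt{aab}$ and $A=\mathtt{aabab}$ the minimal $C$-intervals are $[1,3]$ and $[2,5]$, whose left endpoints $1$ and $2$ both lie in the first run $\mathtt{a}^2$. Likewise, an iteration need not cross any run boundary that was not crossed before: for $C=\mathtt{abb}$ and $A=\mathtt{ababbb}$ the intervals are $[1,4]$ and $[3,5]$, and the second forward scan (starting at position $2$) crosses only boundaries already crossed by the first. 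The paper's actual argument is finer: in the backward (left-maximizing) scan, the position matched to the \emph{last character of the first run of $C$} must be the last position of some run of $A$, because the next character of $C$ is different; this anchor run determines the interval, giving an injection from minimal $C$-intervals into runs of $A$ and hence the $O(m)$ count. You would need this (or an equivalent) observation, not the statements you sketched.

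For the running time, your amortization rests on the claim that the pointer over $\RLE(A)$ ``only moves forward across all iterations'' so that ``no run of $A$ is traversed more than a constant number of times in total.'' This is false: after the backward scan, the next forward scan restarts at $s_i+1 < f_i$, so all runs inside $[s_i,f_i]$ are re-scanned. With $C=(\mathtt{ab})^{K/2}$ and $A=(\mathtt{ab})^{r}$, every run of $A$ is traversed $\Theta(K)$ times and the total run-traversal cost is $\Theta(mK)$, not $O(m)$; your per-iteration $O(K)$ charge explicitly covers only the walk over $\RLE(C)$, so nothing in your accounting pays for this re-scanning (a single forward scan can also touch $\Theta(m)$ runs, so a per-iteration bound of $O(K)$ runs is not available either). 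The paper closes exactly this gap by bounding, for each fixed run $a_p^{M_p}$, the number of minimal $C$-intervals intersecting it by $O(K)$: since $|\RLE(C)|>1$ no interval fits inside the run, so each intersecting interval crosses one of the run's two boundaries and is then determined by the split $C=uv$ with $u$ a subsequence on one side of that boundary and $v$ on the other, of which there are at most $K-1$. Summing over runs gives the $O(mK)$ scan bound and thus $O(M+mK)$ overall. Your final figures are correct, but without this per-run intersection argument (or a substitute) neither the $O(m)$ count nor the $O(M+mK)$ time is actually established.
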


\begin{proof}
  Because $|\RLE(C)|>1$, it is easy to see from the backward search in the procedure described above,
  that for any minimal $C$-interval of $A$, 
  there is a unique run of $A$ such that the last character of the first run of $C$ corresponds to the last character of that run.
  Therefore, the number of minimal $C$-intervals of $A$ is $O(m)$.

  We can compute $\RLE(A) = a_1^{M_1} \cdots a_m^{M_m}$ and $\RLE(C)=c_1^{K_1} \cdots c_k^{K_k}$ in $O(M+K)$ time.
  What remains is to show that the forward/backward search procedure described 
  above to compute all minimal $C$-intervals of $A$ can be implemented in $O(mK)$ time.
  The pseudo-code of the algorithm described is shown in Algorithm~\ref{alg:minCinterval}.
  
  In the forward search, we scan $\RLE(A)$ to find a right minimal $C$-interval by greedily matching the runs of $\RLE(C)$ to $\RLE(A)$.
  We maintain the character $c_q$ and exponent $\mathit{rest}$ of the first run $c_q^{\mathit{rest}}$ of $\RLE(C')$, where $C'$ is the suffix of $C$ that is not yet matched.
  When comparing a run $a_p^{M_p}$ of $\RLE(A)$ and $c_q^{\mathit{rest}}$,
  if the characters are different (i.e., $a_p \neq c_q$), we know that the entire run $a_p^{M_p}$ will not match
  and thus we can consider the next run of $A$.
  Suppose the characters are the same. Then, if $M_p < \mathit{rest}$, the entire run $a_p^{M_p}$ of $A$ is matched, and we can consider the next run $a_{p+1}^{M_{p+1}}$ of $A$.
  Also, $\mathit{rest}$ can be updated accordingly in constant time by simple arithmetic. Furthermore, since $c_q = a_p \neq a_{p+1}$, we can in fact skip to the next run
  $a_{p+2}^{M_{p+2}}$.
  If $M_p \geq \mathit{rest}$, the entire run $c_q^{\mathit{rest}}$ is matched, and we consider the next run $c_{q+1}^{K_{q+1}}$ in $C$.
  Also, since $a_p = c_q \neq c_{q+1}$, we can skip the rest of $a_p^{M_p}$ and consider the next run $a_{p+1}^{M_{p+1}}$ of $A$.
  Thus, we spend only constant time for each run of $A$ that is scanned in the forward search. The same holds for the backward search.
  
  To finish the proof, we show that the total number of times that each run of $A$ is scanned in the procedure is bounded by $O(K)$, i.e.,
  the number of minimal $C$-intervals of $A$ that intersects with a given run $a_p^{M_p}$ of $A$ is $O(K)$.
  Since $|\RLE(C)| > 1$, a minimal $C$-interval cannot be contained in $a_p^{M_p}$.
  Thus, for a minimal $C$-interval to intersect with the run $a_p^{M_p}$, it must cross either the left boundary of the run, or the right boundary of the run.
  For a minimal $C$-interval to cross the left boundary of the run, it must be that for some non-empty strings $u,v$ such that $C = uv$,
  $u$ occurs as a subsequence in $a_1^{M_1} \cdots a_{p-1}^{M_{p-1}}$ and 
  $v$ occurs as a subsequence in $a_p^{M_p} \cdots a_{m}^{M_{m}}$.
  The minimal $C$-interval corresponds to the union of the left minimal $u$-interval ending at the
  left boundary of the run and the right minimal $v$-interval starting at the left boundary of the run and is thus unique for $u,v$.
  Similar arguments also hold for minimal $C$-intervals that cross the right boundary of $a_p^{M_p}$.
   Since there are only $K-1$ choices for $u,v$, the claim holds, thus proving the Lemma.
\end{proof}

\begin{algorithm2e}
  \caption{computing all minimal $C$-intervals of $A$}
  \label{alg:minCinterval}
  \KwIn{strings $A$ and $C$}
  \KwOut{all minimal $C$-intervals $[s_1, f_1], \ldots, [s_l, f_l]$ of $A$}
  \tcp{$\RLE(A) = a_1^{M_1} \cdots a_m^{M_m}$,
    $\RLE(C) = c_1^{K_1} \cdots c_k^{K_k}$}
  \tcp{$M_{1..p} = M_1 + \cdots + M_p$}
  \tcp{$p,q$ : index of run in $A,C$ respectively}
  \tcp{$rest$ : number of rest of searching characters of $c_q^{K_q}$}
  \tcp{$l$ : number of minimal $C$-intervals in $A$}
  
  $p \gets 1$;
  $q \gets 1$;
  $rest \gets K_1$;
  $l \gets 0$\;
  
  \While{true}{
    \While(// forward search){$p \leq m$ and $q \leq k$}{ \label{algln:fsearch_begin}
      \lIf{$a_p \neq c_q$}{$p \gets p+1$\;} \label{algln:ap!=cq}
      \Else{
        \uIf{$M_p \geq rest$}{ \label{algln:ap=cq&Mp>=rest_begin}
          $q \gets q+1$\;
          \lIf{$q > k$}{$l \gets l+1$; $f_l \gets M_{1..p-1} + rest$\;}
          \lElse{$p \gets p+1$; $rest \gets K_q$\;} \label{algln:ap=cq&Mp>=rest_end}
        }
        \lElse{$rest \gets rest - M_p$; $p \gets p+2$\;} \label{algln:fsearch_end}
      }
    }
    \lIf{$p > m$}{{\bf break}\;} \label{algln:algoend}
    
    $p \gets p-1$\;
    \lIf{$rest = K_k$}{$q \gets q-1$; $rest \gets K_{k-1}$\;}
    \lElse{$q \gets k$; $rest \gets K_k - rest$\;}
    
    \While(// backward search){$q \geq 1$}{ \label{algln:bsearch_begin}
      \lIf{$a_p \neq c_q$}{$p \gets p-1$\;}
      \Else{
        \uIf{$M_p \geq rest$}{
          $q \gets q+1$\;
          \lIf{$q < 1$}{$s_l \gets M_{1..p} - rest + 1$\;}
          \lElse{$p \gets p-1$; $rest \gets K_q$\;}
        }
        \lElse{$rest \gets rest - M_p$; $p \gets p-2$\;} \label{algln:bsearch_end}
      }
    }
    
    $p \gets p+1$;
    $q \gets 1$;
    $rest \gets K_1 - rest + 1$\;
  }
  \Return{$[s_1, f_1], \ldots, [s_l, f_l]$}\;
\end{algorithm2e}

In Deorowicz's algorithm, two DP tables were computed for Step 2, which took $O(MN)$ time.
For our algorithm, we use a compressed representation of the DP table for $A$ and $B$,
proposed by Bunke and Csirik~\cite{DBLP:journals/ipl/BunkeC95}, instead of the normal DP table.
We note that Bunke and Csirik actually solved the edit distance problem 
when the cost is $1$ for insertion and deletion, and $2$ for substitution,
but this easily translates to LCS: $\Lpref(i,j) = (i+j-\EDpref(i,j))/2$, where
$\EDpref(i,j)$ denotes the edit distance with such costs, between $A[1..i]$ and $B[1..j]$.

\begin{definition}[\cite{DBLP:journals/ipl/BunkeC95}]\label{def:compressedDPtable}
  Let $A$, $B$ be strings of length $M$, $N$ respectively, where $\RLE(A) = a_1^{M_1} \cdots a_m^{M_m}$ and $\RLE(B) = b_1^{N_1} \cdots b_n^{N_n}$.
  The {\em compressed DP table (cDP table) of $A, B$} is an $O(mN+nM)$-space compressed representation of the DP table of $A,B$
  which holds only the values of the DP table for $(M_{1..p},j)$ and $(i,N_{1..q})$,
  where, $1 \leq i \leq M$, $1 \leq j \leq N$, $1 \leq p \leq m$, $1 \leq q \leq n$, $M_{1..p} = M_1 + \cdots + M_p$, $N_{1..q} = N_1 + \cdots + N_q$.
\end{definition}

\begin{figure}[htbp]
  \begin{center}
    \begin{tabular}{|c|c|cccc|ccc|cc|} \hline
      \multicolumn{2}{|c|}{} & \multicolumn{9}{c|}{$B$} \\ \cline{3-11}
      \multicolumn{2}{|c|}{} & \tt{a} & \tt{a} & \tt{a} & \tt{a} & \tt{b} & \tt{b} & \tt{b} & \tt{a} & \tt{a} \\ \hline
        & \tt{b} &   &   &   & 0 &   &   & 1 &   & 1 \\
        & \tt{b} &   &   &   & 0 &   &   & 2 &   & 2 \\
        & \tt{b} & 0 & 0 & 0 & 0 & 1 & 2 & 3 & 3 & 3 \\ \cline{2-11}
    $A$ & \tt{a} &   &   &   & 1 &   &   & 3 &   & 4 \\
        & \tt{a} &   &   &   & 2 &   &   & 3 &   & 5 \\
        & \tt{a} &   &   &   & 3 &   &   & 3 &   & 5 \\
        & \tt{a} & 1 & 2 & 3 & 4 & 4 & 4 & 4 & 4 & 5 \\ \hline
    \end{tabular}
  \end{center}
  \caption{An example of a compressed $\Lpref$ DP table for strings $A=\mathtt{bbbaaaa}$ and $B=\mathtt{aaaabbbaa}$.}
  \label{tbl:compressedDPtable}
\end{figure}

Figure~\ref{tbl:compressedDPtable} illustrates the values stored in the cDP table for strings
$A=\mathtt{bbbaaaa}$, $B=\mathtt{aaaabbbaa}$.
Note that although the figure depicts a sparsely filled table of size $M\times N$,
the values are actually stored in two (completely filled) tables:
one of size $m\times N$, holding the values of $(M_{1..p},j)$, and another of size $M\times n$,
holding the values of $(i, N_{1..q})$, for a total of $O(mN+nM)$ space.
Below are results adapted from~\cite{DBLP:journals/ipl/BunkeC95} we will use.

\begin{lemma}[{\cite[Theorem 7]{DBLP:journals/ipl/BunkeC95}}]
  \label{lem:CDPtable}
  Let $A$ and $B$ be any strings where $|A| = M$, $|B| = N$, $|\RLE(A)| = m$ and $|\RLE(B)| = n$.
  The compressed DP table of $A$ and $B$ can be computed in $O(mN + nM)$ time and space.
\end{lemma}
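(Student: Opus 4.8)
The plan is to make explicit the block structure that the RLE factorizations impose on the DP table and to show that, inside each block, every entry is a trivial function of that block's top row and left column; computing, block by block, each block's bottom row and right column from its top row and left column then fills in exactly the $O(mN+nM)$ entries that the cDP table must store, within $O(mN+nM)$ total time.

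First I would fix notation. Write $\RLE(A) = a_1^{M_1}\cdots a_m^{M_m}$ and $\RLE(B) = b_1^{N_1}\cdots b_n^{N_n}$, and set $M_{1..0} = N_{1..0} = 0$. For $1\le p\le m$ and $1\le q\le n$, call \emph{block $(p,q)$} the set of cells $(i,j)$ with $M_{1..p-1} < i \le M_{1..p}$ and $N_{1..q-1} < j \le N_{1..q}$; its \emph{top border} is row $M_{1..p-1}$ and its \emph{left border} is column $N_{1..q-1}$ (for $p=1$ or $q=1$ these are the all-zero boundary row/column of the DP table). Every $A$-character in the rows of block $(p,q)$ equals $a_p$ and every $B$-character in its columns equals $b_q$, so the recurrence for $\Lpref$ takes one of exactly two uniform forms throughout the block.

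Next I would establish the two interior characterizations. If $a_p = b_q$, iterating $\Lpref(i,j) = \Lpref(i-1,j-1)+1$ gives, for $(i,j)$ in the block, $\Lpref(i,j) = \Lpref(i-t,j-t)+t$ with $t = \min(i - M_{1..p-1},\, j - N_{1..q-1})$, so $(i-t,j-t)$ lies on the block's top or left border. If $a_p \neq b_q$, every in-block dependency moves up or left, so $\Lpref(i,j)$ is the maximum of the top-border values in columns $N_{1..q-1}\ldots j$ and of the left-border values in rows $M_{1..p-1}\ldots i$; since $\Lpref$ is nondecreasing along any fixed row and along any fixed column, this collapses to $\Lpref(i,j) = \max\bigl(\Lpref(M_{1..p-1},j),\, \Lpref(i,N_{1..q-1})\bigr)$. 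In either case the bottom row $\Lpref(M_{1..p},\cdot)$ and right column $\Lpref(\cdot,N_{1..q})$ of the block are recovered from its top and left borders in $O(N_q)$ and $O(M_p)$ time, respectively: in the mismatch case each is the elementwise maximum of a border segment with a single corner value, and in the match case each entry is read off a border cell with an additive shift.

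Then the algorithm processes blocks in row-major order, $p=1,\ldots,m$ and $q=1,\ldots,n$ within each row. One checks that by the time block $(p,q)$ is reached, all of its top border (part of the bottom rows of blocks $(p-1,\cdot)$, or the zero row) and all of its left border (the right column of block $(p,q-1)$, or the zero column), together with the corner cells that the formulas above reference, have already been computed, so block $(p,q)$ is handled in $O(M_p + N_q)$ time. Summing, $\sum_{p=1}^m\sum_{q=1}^n (M_p + N_q) = n\sum_p M_p + m\sum_q N_q = nM + mN$, and the cells actually written are exactly the $(M_{1..p},j)$ and $(i,N_{1..q})$ entries of Definition~\ref{def:compressedDPtable}, which also bounds the space by $O(mN+nM)$. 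Finally, since \cite{DBLP:journals/ipl/BunkeC95} phrases the result for edit distance with unit insertion/deletion cost and substitution cost $2$, one transfers it through $\Lpref(i,j) = (i+j-\EDpref(i,j))/2$, and the cDP table for $\Lsuf$ is built by the same procedure on the reversed strings. I expect the main obstacle to be the careful case analysis in the interior characterizations — the monotonicity argument for the mismatch case, and the degenerate blocks of height or width one and those in the first block row or column — rather than anything conceptually deep.
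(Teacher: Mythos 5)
The paper does not prove this lemma itself---it is quoted directly as Theorem~7 of Bunke and Csirik---and your reconstruction is correct and follows essentially the same block-filling argument as that source: the two per-block characterizations you derive (diagonal shift for matching runs, corner/border maximum with monotonicity for mismatching runs) are exactly the facts the paper imports as Lemmas~\ref{lem:LCScalculation1} and~\ref{lem:LCScalculation2}, and your $\sum_{p,q}(M_p+N_q)=mN+nM$ accounting in row-major block order is the standard time/space analysis. So this is a faithful, correct expansion of the cited result rather than a different route.
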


\begin{lemma}[{\cite[Lemma 3]{DBLP:journals/ipl/BunkeC95}}]
  \label{lem:LCScalculation1}
  Let $\alpha \in \Sigma$ and
  let $A$ and $B$ be any strings where $|A| = M$ and $|B| = N$.
  For any integer $d \geq 1$,
  if $A[M-d+1..M] = B[N-d+1..N] = \alpha^d$,
  then $\Lpref(M,N) = \Lpref(M-d,N-d) + d$.
\end{lemma}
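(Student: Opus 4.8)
The plan is to establish the two inequalities $\Lpref(M,N)\ge\Lpref(M-d,N-d)+d$ and $\Lpref(M,N)\le\Lpref(M-d,N-d)+d$ separately. The first is immediate: given any LCS $X$ of $A[1..M-d]$ and $B[1..N-d]$, the string $X\alpha^d$ is a common subsequence of $A$ and $B$ because both $A$ and $B$ end with the block $\alpha^d$, so $\Lpref(M,N)\ge|X|+d=\Lpref(M-d,N-d)+d$.

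For the reverse inequality I would fix an LCS $Z$ of $A$ and $B$, say of length $L=\Lpref(M,N)$, together with increasing index sequences $i_1<\cdots<i_L$ in $A$ and $j_1<\cdots<j_L$ in $B$ witnessing that $Z$ is a common subsequence. Let $a$ be the number of $i_t$ that fall inside the trailing run, i.e.\ $i_t>M-d$, and let $b$ be defined analogously for $B$. Since there are only $d$ positions in $(M-d,M]$ we get $a\le d$, and similarly $b\le d$; and because that run sits at the very end of $A$, the indices counted by $a$ are precisely the largest ones $i_{L-a+1},\dots,i_L$, so $i_1<\cdots<i_{L-a}\le M-d$ (and likewise for $B$). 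Hence, writing $r=\max(a,b)\le d$, the first $L-r$ entries of the two witness sequences show that $Z[1..L-r]$ is a common subsequence of both $A[1..M-d]$ and $B[1..N-d]$, giving $L-r\le\Lpref(M-d,N-d)$ and therefore $\Lpref(M,N)=L\le\Lpref(M-d,N-d)+r\le\Lpref(M-d,N-d)+d$.

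I do not expect a genuine obstacle here; the one step worth stating carefully is the claim that the matched positions lying inside the length-$d$ suffix run of $A$ (resp.\ $B$) are exactly the last few indices of the witness sequence, which is precisely what lets us delete them ``from the end'' and still have a valid common subsequence of the two prefixes. An equally short alternative would be to iterate the classical one-character recurrence $\Lpref(i,j)=\Lpref(i-1,j-1)+1$, valid whenever $A[i]=B[j]$, at the $d$ diagonal positions $(M-t+1,N-t+1)$ for $t=1,\dots,d$ --- each of which satisfies $A[M-t+1]=B[N-t+1]=\alpha$ --- to obtain $\Lpref(M,N)=\Lpref(M-1,N-1)+1=\cdots=\Lpref(M-d,N-d)+d$ in a single chain; I would pick whichever better matches the surrounding exposition.
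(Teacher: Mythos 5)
Your proof is correct. Note that the paper itself does not prove this statement at all: it is imported verbatim as Lemma~3 of Bunke and Csirik, so there is no in-paper argument to compare against. Both of your routes work. The counting argument for the upper bound is sound --- since the witness index sequences are increasing, the matches landing in the trailing run $\alpha^d$ of $A$ (resp.\ $B$) are exactly the last $a$ (resp.\ $b$) entries, so truncating $Z$ by $r=\max(a,b)\le d$ characters leaves a common subsequence of $A[1..M-d]$ and $B[1..N-d]$, which is precisely what you need. Your second alternative, iterating the classical recurrence $\Lpref(i,j)=\Lpref(i-1,j-1)+1$ along the $d$ diagonal positions where both strings still end in $\alpha$, is the shorter and more standard derivation (essentially how the original reference argues), and would be the natural choice if a proof were to be spelled out in this paper; the only point to make explicit there is that after stripping $t<d$ characters from each string the last characters are still both $\alpha$, so the recurrence remains applicable at every step. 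One cosmetic remark: at the boundary $M=d$ or $N=d$ you implicitly use the convention $\Lpref(0,\cdot)=\Lpref(\cdot,0)=0$, which the paper's definition of $\Lpref$ (stated for indices $\ge 1$) leaves tacit; it is harmless but worth a word.
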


\begin{lemma}[{\cite[Lemma 5]{DBLP:journals/ipl/BunkeC95}}]
  \label{lem:LCScalculation2}
  Let $\alpha, \beta \in \Sigma$, $\alpha \neq \beta$ and
  let $A$ and $B$ be any strings where $|A| = M$ and $|B| = N$.
  For any integers $d \geq 1$ and $d' \geq 1$,
  if $A[M-d+1..M] = \alpha^d$ and $B[N-d'+1..N] = \beta^{d'}$
  then $\Lpref(M,N) = \max\{\Lpref(M-d,N), \Lpref(M,N-d')\}$.
\end{lemma}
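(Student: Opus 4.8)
The plan is to establish the claimed equality by proving the two inequalities $\Lpref(M,N) \geq \max\{\Lpref(M-d,N),\Lpref(M,N-d')\}$ and $\Lpref(M,N) \leq \max\{\Lpref(M-d,N),\Lpref(M,N-d')\}$ separately. The first is immediate from the monotonicity of $\Lpref$: since $A[1..M-d]$ is a prefix of $A[1..M]$ and $B[1..N-d']$ is a prefix of $B[1..N]$, any common subsequence of a shorter pair is also a common subsequence of $A[1..M]$ and $B[1..N]$. Hence $\Lpref$ is non-decreasing in each of its arguments, and both $\Lpref(M-d,N)$ and $\Lpref(M,N-d')$ are at most $\Lpref(M,N)$, so their maximum is as well.

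For the reverse inequality I would fix an LCS $Z$ of $A[1..M]$ and $B[1..N]$ with $|Z| = \Lpref(M,N)$ and argue by examining the last character of $Z$. If $Z$ is empty the inequality holds trivially, so I may assume $|Z| \geq 1$ and let $z$ be its final character. The crucial observation, and the only place where the hypothesis $\alpha \neq \beta$ is used, is that $z$ cannot equal both $\alpha$ and $\beta$; hence at least one of $z \neq \alpha$ or $z \neq \beta$ must hold.

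First I would treat the case $z \neq \alpha$. Fixing an embedding of $Z$ into $A$, its last character matches some position $i$ with $A[i] = z \neq \alpha$. Because $A[M-d+1..M] = \alpha^d$ consists solely of $\alpha$'s, we must have $i \leq M-d$, so the entire embedding lies within $A[1..M-d]$. Thus $Z$ is a common subsequence of $A[1..M-d]$ and $B[1..N]$, giving $\Lpref(M-d,N) \geq |Z| = \Lpref(M,N)$. The case $z \neq \beta$ is entirely symmetric: using $B[N-d'+1..N] = \beta^{d'}$ forces the matching position in $B$ to be at most $N-d'$, yielding $\Lpref(M,N-d') \geq \Lpref(M,N)$. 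Since one of the two cases always applies, $\max\{\Lpref(M-d,N),\Lpref(M,N-d')\} \geq \Lpref(M,N)$, and combining with the first inequality gives the claimed equality.

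The argument is elementary, so I do not anticipate a serious obstacle; the single point requiring care is making the case split airtight. The decisive idea is the simple remark that the final symbol of any common subsequence can coincide with at most one of $\alpha,\beta$, which lets us \emph{retract} the corresponding terminal block from whichever of $A$ or $B$ does not end in that symbol. I would also verify the boundary situations $d = M$ and $d' = N$, where one of $A[1..M-d]$, $B[1..N-d']$ degenerates to the empty prefix: there the matching position demanded in the corresponding case cannot exist, which simply forces the other case of the split to hold, so the reasoning goes through without modification.
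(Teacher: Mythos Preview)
Your argument is correct. The paper itself does not supply a proof of this lemma; it is quoted verbatim as a result of Bunke and Csirik~\cite{DBLP:journals/ipl/BunkeC95}, so there is no in-paper proof to compare against. Your two-inequality decomposition with the case split on the last symbol of an LCS is the standard elementary argument for this fact, and your handling of the degenerate cases $d=M$ and $d'=N$ is sound (indeed, when $A=\alpha^M$ every character of any common subsequence is $\alpha$, forcing the $z\neq\beta$ branch).
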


From Lemmas~\ref{lem:LCScalculation1} and~\ref{lem:LCScalculation2},
we easily obtain the following Lemma~\ref{lem:DPandCDPtable}.

\begin{lemma}\label{lem:DPandCDPtable}
  Let $A$ and $B$ be any strings.
  Any entry of the DP table of $A$ and $B$ can be retrieved in $O(1)$ time
  by using the compressed DP table of $A$ and $B$.
\end{lemma}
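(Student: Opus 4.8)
The plan is to show that, given an arbitrary query position $(i,j)$ with $1 \le i \le M$ and $1 \le j \le N$, the value $\Lpref(i,j)$ can be recovered from the compressed DP table with a constant number of table lookups and arithmetic operations; the case of $\Lsuf$ is entirely symmetric (apply the mirror images of Lemmas~\ref{lem:LCScalculation1} and~\ref{lem:LCScalculation2}). First I would locate the runs containing $i$ and $j$: let $p$ be the index with $M_{1..p-1} < i \le M_{1..p}$ and $q$ the index with $N_{1..q-1} < j \le N_{1..q}$. As cheap preprocessing (well within the $O(mN+nM)$ budget of Lemma~\ref{lem:CDPtable}), build two arrays of sizes $M$ and $N$ mapping each row index to its run index in $A$ and each column index to its run index in $B$, so that $p$, $q$, and the boundary values $M_{1..p-1}, M_{1..p}, N_{1..q-1}, N_{1..q}$ are all available in $O(1)$ time.

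If $i = M_{1..p}$ or $j = N_{1..q}$, then $(i,j)$ is one of the positions explicitly stored in the compressed DP table by Definition~\ref{def:compressedDPtable}, so $\Lpref(i,j)$ is read off directly. Otherwise $M_{1..p-1} < i < M_{1..p}$ and $N_{1..q-1} < j < N_{1..q}$, and I split into two cases according to whether $a_p = b_q$. If $a_p = b_q = \alpha$, set $d = \min\{\, i - M_{1..p-1},\ j - N_{1..q-1}\,\} \ge 1$; then $A[i-d+1..i] = B[j-d+1..j] = \alpha^d$, so Lemma~\ref{lem:LCScalculation1}, applied to the prefixes $A[1..i]$ and $B[1..j]$, gives $\Lpref(i,j) = \Lpref(i-d,\,j-d) + d$. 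By the choice of $d$ we have $i - d = M_{1..p-1}$ or $j - d = N_{1..q-1}$, so $\Lpref(i-d,j-d)$ lies on a stored full row or full column of the compressed table (or equals $0$ when the index is $0$), and the retrieval is $O(1)$. If instead $a_p \ne b_q$, set $d = i - M_{1..p-1} \ge 1$ and $d' = j - N_{1..q-1} \ge 1$; then $A[i-d+1..i] = a_p^{\,d}$ and $B[j-d'+1..j] = b_q^{\,d'}$ with $a_p \ne b_q$, so Lemma~\ref{lem:LCScalculation2} yields $\Lpref(i,j) = \max\{\,\Lpref(M_{1..p-1},\, j),\ \Lpref(i,\, N_{1..q-1})\,\}$, and both arguments are explicitly stored (a full row and a full column, respectively, or $0$ if the first index is $0$). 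Hence every entry is obtained in $O(1)$ time, proving the lemma.

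There is essentially no deep obstacle: the statement is a direct packaging of Lemmas~\ref{lem:LCScalculation1} and~\ref{lem:LCScalculation2}. The only points that require care are (i) verifying that the diagonal/axis reduction always lands on a position the compressed table actually stores, which is exactly why $d$ in the matching case must be truncated at whichever of the two run boundaries is reached first; (ii) handling the boundary subcases $i = M_{1..p}$, $j = N_{1..q}$, and $i$ (resp. $j$) lying in the first run, where $M_{1..0} = 0$ (resp. $N_{1..0} = 0$) and $\Lpref(0,\cdot) = \Lpref(\cdot,0) = 0$; and (iii) ensuring the run-index lookup is genuinely $O(1)$ rather than $O(\log m)$ or $O(\log n)$, which the auxiliary arrays described above provide.
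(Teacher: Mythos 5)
Your proposal is correct and matches the paper's intended argument: the paper gives no detailed proof, stating only that the lemma follows ``easily'' from Lemmas~\ref{lem:LCScalculation1} and~\ref{lem:LCScalculation2} together with $O(M+N)$-size index-conversion arrays, and your case analysis (direct lookup on a stored row/column; diagonal reduction by $d=\min\{i-M_{1..p-1},\,j-N_{1..q-1}\}$ when $a_p=b_q$; reduction to $\max\{\Lpref(M_{1..p-1},j),\Lpref(i,N_{1..q-1})\}$ when $a_p\neq b_q$) is exactly that argument spelled out, including the boundary cases where an index becomes $0$.
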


From Lemma~\ref{lem:CDPtable}, we can compute in $O(mN+nM)$ time,
two cDP tables of $A,B$ which respectively hold the values of
$\Lpref(M_{1..p},j)$, $\Lpref(i,N_{1..q})$
and
$\Lsuf(M_{1..p},j)$, $\Lsuf(i,N_{1..q})$,
each of them taking $O(mN+nM)$ space.
From Lemma~\ref{lem:DPandCDPtable},
we can obtain $\Lpref(i,j)$, $\Lsuf(i,j)$ for any $i$ and $j$ in $O(1)$ time.
Actually, to make Lemma~\ref{lem:DPandCDPtable} work, we also need to be able to convert the indexes
between DP and cDP in constant time, i.e.,
for any $1 \leq p \leq m$, $1 \leq q \leq n$, the values $M_{1..p}$ and $N_{1..q}$, and
for any $1 \leq i \leq M$, $1 \leq j \leq N$, the largest $p,q$ such that $M_{1..p} \leq i$, $N_{1..q} \leq j$.
This is easy to do by preparing some arrays in $O(M+N)$ time and space.

Now we are ready to show the running time of our algorithm for the case $|\RLE(C)|>1$.
We can compute $\RLE(A)$,$\RLE(B)$,$\RLE(C)$ from $A,B,C$ in $O(M+N+K)$ time.
In Step 1, we have from Lemma~\ref{lem:MinInterval k>1}, that the number of all
minimal $C$-intervals of $A,B$ are respectively $O(m)$ and $O(n)$, and can be computed in
$O(M + N + mK + nK)$ time.
For the preprocessing of Step 2, we build the cDP tables holding the values of
$\Lpref(i,j)$, $\Lsuf(i,j)$ for $1 \leq i \leq M$, $1 \leq j \leq N$,
which can be computed in $O(mN+nM)$ time and space from Lemma~\ref{lem:CDPtable}.
With these tables, we can obtain for any
$i,j$, the values $\Lpref(i,j)$, $\Lsuf(i,j)$ in constant time from Lemma~\ref{lem:DPandCDPtable}.
Since there are $O(mn)$ pairs of a minimal $C$-interval of $A$ and a minimal $C$-interval of $B$,
the total time for Step 2, i.e. computing $\Lpref$ and $\Lsuf$ for each of the pairs, is $O(mn)$.
Since $n\leq N, m\leq M$, and we can assume that $K \leq M, N$, the total time is $O(mN+nM)$. 
Thus Theorem~\ref{thm:TimeComplexity k>1} holds.

\subsubsection{Case $|\RLE(C)| = 1$}

Next, we consider the case where $|\RLE(C)| = 1$, and $C$ consists of only one run.

\begin{theorem}\label{thm:TimeComplexity k=1}
  Let $A,B,C$ be any strings and let $|A|=M$, $|B|=N$, $|\RLE(A)|=m$ and $|\RLE(B)|=n$.
  If $|\RLE(C)|=1$, we can compute the length of an STR-IC-LCS of $A,B,C$ in $O(mN + nM)$ time.
\end{theorem}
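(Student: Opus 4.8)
The plan is to keep the Step~1/Step~2 template whose correctness was established via Lemma~\ref{lem:CharacteristicSTR-IC-LCS}: it suffices to maximize $\Lpref(s{-}1,s'{-}1)+\Lsuf(f{+}1,f'{+}1)+K$ over minimal $C$-intervals $[s,f]$ of $A$ and $[s',f']$ of $B$. The obstruction is that when $C=\alpha^K$ for a single character $\alpha$, the minimal $C$-intervals of $A$ are exactly the windows $[p_i,p_{i+K-1}]$ over the positions $p_1<\cdots<p_t$ of $\alpha$ in $A$, of which there can be $t-K+1=\Theta(M)$, so we cannot iterate over all $\Theta(MN)$ pairs. We instead partition the minimal $C$-intervals of $A$ into \emph{families}, where a family is a maximal block of consecutive indices $i$ for which both the $\alpha$-run of $A$ containing $p_i$ and the $\alpha$-run of $A$ containing $p_{i+K-1}$ are fixed. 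Since each of these two runs is nondecreasing in $i$, a new family can begin only when $p_i$ or $p_{i+K-1}$ enters a new $\alpha$-run of $A$, so there are $O(m)$ families of $A$ (and $O(n)$ of $B$), all computable in $O(M+N)$ time by a single scan. Inside a family, consecutive $\alpha$'s are consecutive positions at both endpoints, so the interval slides \emph{rigidly}: the family has the form $\{\,[s_0{+}\delta,\ f_0{+}\delta]:0\le\delta\le\Delta_A\,\}$ for fixed $s_0,f_0$, and likewise a $B$-family is $\{\,[s'_0{+}\delta',\ f'_0{+}\delta']:0\le\delta'\le\Delta_B\,\}$; moreover $\sum_{A\text{-fam}}\Delta_A\le M$ and $\sum_{B\text{-fam}}\Delta_B\le N$.

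The heart of the argument is that for a fixed pair of families the Step~2 objective collapses to a one-parameter problem. We must compute $\max_{\delta,\delta'}\big(\Lpref(s_0{-}1{+}\delta,\ s'_0{-}1{+}\delta')+\Lsuf(f_0{+}1{+}\delta,\ f'_0{+}1{+}\delta')\big)+K$. All of $s_0,\dots,s_0{+}\Delta_A$ lie in one $\alpha$-run of $A$ (and similarly for the other three families), so whenever $\delta,\delta'\ge 1$ both prefixes $A[1..s_0{-}1{+}\delta]$ and $B[1..s'_0{-}1{+}\delta']$ end in $\alpha$, and the LCS recurrence gives $\Lpref(s_0{-}1{+}\delta,\ s'_0{-}1{+}\delta')=\Lpref(s_0{-}2{+}\delta,\ s'_0{-}2{+}\delta')+1$; iterating until $\delta=0$ or $\delta'=0$ yields $\Lpref(s_0{-}1{+}\delta,\ s'_0{-}1{+}\delta')=\min(\delta,\delta')+P_\partial(w)$, where $w:=\delta'-\delta$ and $P_\partial(w)$ is a single entry of the $\Lpref$ table whose indices depend only on $w$. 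Running the recurrence from the front on the suffix side (both suffixes begin with $\alpha$ as long as $\delta<\Delta_A$ and $\delta'<\Delta_B$) symmetrically gives $\Lsuf(f_0{+}1{+}\delta,\ f'_0{+}1{+}\delta')=\min(\Delta_A{-}\delta,\ \Delta_B{-}\delta')+S_\partial(w)$ for an $\Lsuf$-entry $S_\partial(w)$ again depending only on $w$. Adding the two expressions, the terms linear in $\delta$ cancel, so the objective for the family pair equals $\max_{-\Delta_A\le w\le\Delta_B}\big(\min(0,w)+\min(\Delta_A,\Delta_B{-}w)+P_\partial(w)+S_\partial(w)\big)+K$, a maximum over $O(\Delta_A+\Delta_B)$ integers.

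As in the case $|\RLE(C)|>1$ we precompute, in $O(mN+nM)$ time and space, the compressed DP tables for $\Lpref$ and $\Lsuf$ (Lemma~\ref{lem:CDPtable}) together with the index-conversion arrays, so that by Lemma~\ref{lem:DPandCDPtable} every value $P_\partial(w)$ and $S_\partial(w)$ is obtained in $O(1)$ time. Solving the one-parameter problem for one pair of families then costs $O(\Delta_A+\Delta_B+1)$, and summing over all $O(mn)$ family pairs gives $\sum(\Delta_A+\Delta_B+1)=O\!\big(n\sum_A\Delta_A+m\sum_B\Delta_B+mn\big)=O(nM+mN)$; combined with the $O(M+N)$ to build the families and the $O(mN+nM)$ preprocessing, this proves the theorem. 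The part needing care in a full write-up is the bookkeeping for the two recurrences: identifying exactly which $(\delta,\delta')$ keep both prefixes (resp.\ suffixes) inside the relevant $\alpha$-runs, handling the boundaries $\delta\in\{0,\Delta_A\}$, $\delta'\in\{0,\Delta_B\}$ and the degenerate families (e.g.\ $\Delta_A=0$, or an $\alpha$-run ending exactly at a window endpoint), and verifying that $P_\partial$ and $S_\partial$ genuinely depend on $w$ alone; granting this, the cancellation and the time bound are immediate.
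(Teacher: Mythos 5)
Your proposal is correct and follows essentially the same route as the paper: you partition the minimal $C$-intervals by the pair (run containing the start, run containing the end), getting $O(m)$ and $O(n)$ classes within which intervals slide rigidly, observe that for a fixed pair of classes the objective $\Lpref+\Lsuf$ depends only on the relative shift $w=\delta'-\delta$ (your explicit $\min(\delta,\delta')+P_\partial(w)$ / $\min(\Delta_A-\delta,\Delta_B-\delta')+S_\partial(w)$ formulas are just a quantitative restatement of the paper's diagonal-invariance Lemma~\ref{lem:calcLCSinG}, proved from the same run-peeling identity of Lemma~\ref{lem:LCScalculation1}), and then evaluate one representative per diagonal in $O(1)$ via the cDP tables, with the same $O(mN+nM)$ accounting.
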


For Step 1, we compute all minimal $C$-intervals of $A$ and $B$ by
Lemma~\ref{lem:MinInterval k=1}. Note the difference from Lemma~\ref{lem:MinInterval k>1}
in the case of $|\RLE(C)|>1$.

\begin{lemma}\label{lem:MinInterval k=1}
  If $|\RLE(C)|=1$,
  the number of minimal $C$-intervals of $A$ and $B$
  are $O(M)$ and $O(N)$, respectively,
  and these can be enumerated in $O(M)$ and $O(N)$ time, respectively.
\end{lemma}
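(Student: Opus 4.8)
The plan is to analyze the structure of minimal $C$-intervals when $C = \alpha^K$ for a single character $\alpha$. First I would observe that a minimal $C$-interval $[s,f]$ of $A$ must have $A[s] = A[f] = \alpha$, since shrinking from either end must destroy the property that $\alpha^K$ is a subsequence; in particular both endpoints lie inside runs of $\alpha$ in $A$. So each minimal $C$-interval is determined by choosing a starting position $s$ inside some $\alpha$-run and then taking $f$ to be the smallest position such that $A[s..f]$ contains $K$ occurrences of $\alpha$; the minimality at the right end is automatic from this greedy choice, and minimality at the left end forces $A[s]=\alpha$. Conversely, distinct valid starting positions give distinct intervals. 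Hence the number of minimal $C$-intervals of $A$ is at most the number of positions $i$ with $A[i]=\alpha$, which is at most $M$, and symmetrically at most $N$ for $B$.

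Next I would give the enumeration procedure. Scan $A$ left to right, maintaining a queue (or just indices) of the positions of the last $K$ occurrences of $\alpha$ seen so far. Whenever the count of $\alpha$'s reaches $K$, the current position is $f$ and $s$ is the position of the $K$-th-from-last $\alpha$; emit $[s,f]$. Then advance: the next candidate start is the next $\alpha$ after the current $s$, and we extend $f$ to the next $\alpha$ after the old $f$. Because we only ever move $s$ and $f$ forward, and each is moved past each character of $A$ at most once, the whole scan runs in $O(M)$ time; equivalently one can precompute the sorted list of positions of $\alpha$ in $A$ in $O(M)$ time and then read off consecutive windows of size $K$, each window in $O(1)$ time, for a total of $O(M)$. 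The same argument gives $O(N)$ time for $B$.

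The step I expect to require the most care is the correctness argument that this greedy forward/backward procedure enumerates exactly the minimal $C$-intervals and nothing is missed, mirroring the justification already given in the $|\RLE(C)|>1$ case: one must check that consecutive emitted intervals $[s_i,f_i]$ and $[s_{i+1},f_{i+1}]$ leave no minimal $C$-interval strictly between them and that every minimal $C$-interval is hit. This follows because any minimal $C$-interval with start in $(s_i, s_{i+1}]$ — where $s_{i+1}$ is the first $\alpha$ after $s_i$ — must actually start at $s_{i+1}$ (its start is an $\alpha$) and then its right end is forced, so it equals $[s_{i+1}, f_{i+1}]$. I would also note the contrast with the $|\RLE(C)|>1$ case that the paper flags: here a minimal $C$-interval can be contained in a single run of $A$ (namely when that run has length $\geq K$), so the bound is genuinely $O(M)$ rather than $O(m)$, which is why a separate lemma is needed.
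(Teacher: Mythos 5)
Your proof is correct and follows essentially the same route as the paper, which simply observes that the minimal $C$-intervals are exactly the windows spanning $K$ consecutive occurrences of $\alpha$ (so there are $M_\alpha-K+1 \in O(M)$ of them) and enumerates them by scanning the positions of $\alpha$ in $O(M)$ time. Your extra detail on the two-pointer scan and the contrast with the $|\RLE(C)|>1$ case is fine but not a different argument.
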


\begin{proof}
  Let $\alpha \in \Sigma$, $C=\alpha^K$, and let $M_\alpha$ be the number of times that $\alpha$ occurs in $A$.
  Then the number of minimal $C$-intervals of $A$ is $M_\alpha-K+1 \in O(M)$.
  The minimal $C$-intervals can be enumerated in $O(M)$ time by checking all positions of $\alpha$ in $A$.
  The same applies to $B$.
\end{proof}

From Lemma~\ref{lem:MinInterval k=1}, we can see that the number of pairs
of minimal $C$-intervals of $A$ and $B$ can be $\Theta(MN)$, and we cannot
afford to consider all of those pairs for Step 2.
We overcome this problem as follows.
Let
$U = \{[s_1, f_1], \ldots, [s_l, f_l]\}$ be the
set of all minimal $C$-intervals of $A$.
Consider the partition $G(1), \ldots, G(g)$ of $U$
which are the equivalence classes induced by the following equivalence relation on $U$:
For any $1 \leq p \leq q \leq m$ and $[s_x, f_x], [s_y, f_y] \in U$,
\begin{eqnarray}
[s_x,f_x] \equiv [s_y,f_y] \iff
M_{1..p-1} < s_x ,s_y \leq M_{1..p} \mbox{ and }
M_{1..q-1} < f_x , f_y \leq M_{1..q},\label{eqn:equivalence}
\end{eqnarray}
where, $M_{1..0} = 0$. In other words, $[s_x,f_x]$ and $[s_y,f_y]$ are in the same equivalence class
if they start in the same run, and end in the same run.
Noticing that minimal $C$-intervals cannot be completely contained in another,
we can assume that for $1\leq h < h'\leq g$, $[s_x, f_x] \in G(h)$ and $[s_y, f_y] \in G(h')$,
we have $s_x < s_y$ and $f_x < f_y$.

\begin{lemma}\label{lem:numofG}
  Let $G(1), \ldots, G(g)$ be the partition of the set $U$ of all minimal $C$-intervals of $A$
  induced by the equivalence relation~(\ref{eqn:equivalence}).
  Then, $g \in O(m)$.
\end{lemma}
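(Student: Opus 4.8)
The plan is to exploit the ordering of the equivalence classes that has just been established before the lemma: we may list them as $G(1), \ldots, G(g)$ so that, going from one class to the next, both the left endpoints and the right endpoints of the intervals strictly increase. First I would make the structure of the minimal $C$-intervals explicit in this case. Writing $C = \alpha^K$ and letting $o_1 < o_2 < \cdots < o_{M_\alpha}$ be the positions of $\alpha$ in $A$, the minimal $C$-intervals are exactly $[o_i, o_{i+K-1}]$ for $1 \le i \le M_\alpha - K + 1$ (this is the content of Lemma~\ref{lem:MinInterval k=1}). Both endpoints of these intervals are strictly increasing in $i$, so enumerating them by increasing $i$ coincides with enumerating them by increasing $s$ and with enumerating them by increasing $f$; consequently the index of the run of $A$ containing the left endpoint, and the index of the run containing the right endpoint, are each \emph{non-decreasing} along this list, and hence along the class order $G(1), \ldots, G(g)$.

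Next I would attach to each class $G(h)$ the pair $(p_h, q_h)$, where $p_h$ is the index of the run in which every interval of $G(h)$ starts and $q_h$ the index of the run in which every interval of $G(h)$ ends; these are well defined precisely because the equivalence relation~(\ref{eqn:equivalence}) groups intervals by (start run, end run). By the monotonicity above, $p_1 \le p_2 \le \cdots \le p_g$ and $q_1 \le q_2 \le \cdots \le q_g$. Moreover, two different classes cannot carry the same pair: if $[s_x,f_x]\in G(h)$ and $[s_y,f_y]\in G(h')$ had the same start run and the same end run, then $[s_x,f_x]\equiv[s_y,f_y]$ by~(\ref{eqn:equivalence}) and so $h=h'$. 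Hence for every $h$ with $1 \le h < g$ we have $p_{h+1} > p_h$ or $q_{h+1} > q_h$, and in either case $(p_{h+1}-p_h)+(q_{h+1}-q_h) \ge 1$.

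Finally I would sum these $g-1$ inequalities; the sum telescopes, giving
\[
g - 1 \;\le\; (p_g - p_1) + (q_g - q_1).
\]
Since $1 \le p_1 \le p_g \le m$ and $1 \le q_1 \le q_g \le m$, the right-hand side is at most $2(m-1)$, so $g \le 2m-1 \in O(m)$, proving the lemma.

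I do not expect a genuine obstacle here: the whole argument is a monotonicity-plus-counting observation, and the ordering of the classes together with the non-containment of minimal $C$-intervals — the facts it rests on — are already in place in the text immediately preceding the lemma. The one point needing a little care is the claim that the two run-index sequences are non-decreasing (not merely eventually so); this is exactly where one uses that $s$ and $f$ are \emph{strictly} increasing across classes. Once that is pinned down, the telescoping bound is immediate.
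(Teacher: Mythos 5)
Your proposal is correct and follows essentially the same route as the paper: the paper's proof also observes that between consecutive classes $G(h-1)$ and $G(h)$ either the start-run index or the end-run index must strictly increase, and concludes $g \in O(m)$ by the same monotone counting that you make explicit with the telescoping sum. Your write-up merely spells out the monotonicity of the run indices and the bound $g \le 2m-1$, which the paper leaves implicit.
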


\begin{proof}
  Let $1 \leq x < y \leq l$ and $2 \leq h \leq g$.
  For any $[s_x, f_x] \in G(h-1)$ and $[s_y, f_y] \in G(h)$,
  let $1 \leq p \leq q \leq m$ satisfy
  $M_{1..p-1} < s_x \leq M_{1..p}$, $M_{1..q-1} < f_x \leq M_{1..q}$.
  Since the intervals are not equivalent, either $M_{1..p} < s_y$ or $M_{1..q} < f_y$ must hold.
  Thus, $g \in O(m)$.
\end{proof}

Equivalently for $B$, we consider the set $U' = \{[s'_1, f'_1], \ldots, [s'_{l'}, f'_{l'}]\}$ 
of all minimal $C$-intervals of $B$, and the partition $G'(1), \ldots, G'(g')$ of $U'$ 
based on the analogous equivalence relation, where $g' \in O(n)$.

For some $h$, let $[s_x, f_x]$, $[s_y, f_y]$ be the minimal $C$-intervals in $G(h)$ with the smallest and largest start positions.
Since by definition, $A[s_x] = \cdots = A[s_y] = A[f_x] = \cdots = A[f_y]$,
we have $G(h) = \{[s_x+i, f_x+i], [s_x+1, f_x+1], \ldots, [s_y, f_y]\}$.
The same can be said for $G'(h')$ of $B$.
From this observation, we can show the following Lemma~\ref{lem:calcLCSinG}.

\begin{lemma}\label{lem:calcLCSinG}
  For any $1 \leq h \leq g$ and $1 \leq h' \leq g'$, let
  $[s, f]$, $[s+d, f+d] \in G(h)$ and 
  $[s', f']$, $[s'+d, f'+d] \in G'(h')$,
   for some positive integer $d$.
  Then,
  \[\Lpref(s-1, s'-1) + \Lsuf(f+1, f'+1) = \Lpref(s+d-1, s'+d-1) + \Lsuf(f+d+1, f'+d+1).\]
\end{lemma}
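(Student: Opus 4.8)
The plan is to split the claimed identity into a \emph{prefix part} and a \emph{suffix part} and settle each by a single application of Lemma~\ref{lem:LCScalculation1} (respectively its left--right mirror). Concretely, I would prove
\[\Lpref(s+d-1, s'+d-1) = \Lpref(s-1, s'-1) + d \quad\text{and}\quad \Lsuf(f+1, f'+1) = \Lsuf(f+d+1, f'+d+1) + d,\]
and then add these two equations; the $+d$ and the $-d$ cancel and the lemma follows.

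The first step is to pin down the run structure. Write $C = \alpha^K$, which is legitimate since $|\RLE(C)| = 1$. Minimality of the $C$-interval $[s,f]$ of $A$ forces $A[s] = A[f] = \alpha$, for otherwise $C$ would already be a subsequence of $A[s+1..f]$ or of $A[s..f-1]$. Since $[s,f]$ and $[s+d,f+d]$ lie in the same class $G(h)$, the relation~(\ref{eqn:equivalence}) tells us that $s$ and $s+d$ lie in a common run of $A$, and $f$ and $f+d$ lie in a common run of $A$; as each of these runs contains a position carrying the letter $\alpha$, both are $\alpha$-runs, so $A[s..s+d] = \alpha^{d+1}$ and $A[f..f+d] = \alpha^{d+1}$. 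The identical argument applied to $B$ and $G'(h')$ gives $B[s'..s'+d] = \alpha^{d+1}$ and $B[f'..f'+d] = \alpha^{d+1}$ (this is the observation already recorded just before the statement). In particular $A[s..s+d-1] = B[s'..s'+d-1] = \alpha^d$ and $A[f+1..f+d] = B[f'+1..f'+d] = \alpha^d$; note also that, since $[s+d,f+d]$ is a genuine minimal $C$-interval of $A$, we have $f+d \leq M$, and likewise $f'+d \leq N$, so all of these substrings are well defined.

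For the prefix part I would apply Lemma~\ref{lem:LCScalculation1} to the strings $A[1..s+d-1]$ and $B[1..s'+d-1]$: their length-$d$ suffixes are both $\alpha^d$, so the lemma yields $\Lpref(s+d-1, s'+d-1) = \Lpref(s-1, s'-1) + d$. For the suffix part I would apply the mirror image of Lemma~\ref{lem:LCScalculation1} (the version obtained by reversing both strings, which holds by the obvious symmetry between $\Lpref$ and $\Lsuf$) to $A[f+1..M]$ and $B[f'+1..N]$: their length-$d$ \emph{prefixes} are both $\alpha^d$, so $\Lsuf(f+1, f'+1) = \Lsuf(f+d+1, f'+d+1) + d$. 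Adding the two displayed equalities gives exactly the claimed identity.

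I do not expect a genuine obstacle here: essentially all of the content is in the first step, where one must combine the minimality of the $C$-intervals (forcing the endpoints to carry the letter $\alpha$) with membership in a common equivalence class (forcing the corresponding endpoints of the two intervals into common runs) to conclude that the stretches $A[s..s+d]$, $A[f..f+d]$, $B[s'..s'+d]$, $B[f'..f'+d]$ are runs of $\alpha$ of the same length $d+1$. The only minor bookkeeping is the degenerate cases $s=1$ or $s'=1$, handled by the convention $\Lpref(0,\cdot) = \Lpref(\cdot,0) = 0$ (which is consistent with Lemma~\ref{lem:LCScalculation1} applied with $M=d$ or $N=d$); on the suffix side no boundary case arises, since $f+d \leq M$ and $f'+d \leq N$ already keep $f < M$ and $f' < N$.
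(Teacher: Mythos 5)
Your proposal is correct and follows essentially the same route as the paper: establish that the stretches around the interval endpoints are runs of $C[1]$ of length $d$, then apply Lemma~\ref{lem:LCScalculation1} (and its left--right mirror for $\Lsuf$) to each of the prefix and suffix parts and add the two equalities. Your write-up just spells out the run-structure argument and the mirror step more explicitly than the paper does.
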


\begin{proof}
  Since $A[s..s+d] = A[f..f+d] = B[s'..s'+d] = B[f'..f'+d] = C[1]^d$,
  we have from Lemma~\ref{lem:LCScalculation1},
  $\Lpref(s+d-1, s'+d-1) = \Lpref(s-1, s'-1) + d$,  
  and
  $\Lsuf(f+1, f'+1) = \Lsuf(f+d+1, f'+d+1) + d$.
\end{proof}

From Lemma~\ref{lem:calcLCSinG}, we can see that for any
$G(h), G'(h')$ $(1 \leq h \leq g$, $1 \leq h' \leq g')$,
we do not need to compute $\Lpref(s-1, s'-1) + \Lsuf(f+1, f'+1)$ for all pairs of
$[s, f] \in G(h)$ and $[s', f'] \in G'(h')$.
Let $\Gmin(h)$ and $\Gmin'(h')$ be the minimal $C$-intervals respectively in $G(h)$ and $G'(h')$ with the smallest starting position.
Then, we only need to consider the combination of $\Gmin(h)$ with each of $[s', f'] \in G'(h')$,
and the combination of each of $[s, f] \in G(h)$ with $\Gmin'(h')$.
Therefore, of all combinations of minimal $C$-intervals in $U$ and $U'$,
we only need to consider for all $1 \leq h \leq g$ and $1 \leq h' \leq g'$,
the combination of $\Gmin(h)$ with each of $U'$, and
each of $U$ with $\Gmin'(h')$.
The number of such combinations is clearly $O(mN + nM)$.

For example, consider
$\RLE(A) = \mathtt{a}^5\mathtt{b}^3\mathtt{a}^4\mathtt{b}^2\mathtt{a}^1$,
$\RLE(B) = \mathtt{a}^1\mathtt{b}^3\mathtt{a}^7\mathtt{b}^3$,
$\RLE(C) = \mathtt{a}^5$.
For the minimal $C$-intervals of $A$, we have
$G(1) = \{[1,5]\}$,
$G(2) = \{[2,9],[3,10],[4,11],[5,12]\}$,
$G(3) = \{[9,15]\}$.
For the minimal $C$-intervals of $B$, we have
$G'(1) = \{[1,8]\}$,
$G'(2) = \{[5,9],[6,10],[7,11]\}$.
Also, $\Gmin(2) = [2,9]$, $\Gmin'(2) = [5,9]$.
Figure~\ref{tbl:LCS k=1} shows the lengths of the LCS of prefixes and suffixes for each combination between
minimal $C$-intervals in $G(2)$ and $G'(2)$.
The gray part is the values that are referred to.
The values denoted inside parentheses are not stored in the cDP table, but each of them can be
computed in $O(1)$ time from Lemma~\ref{lem:DPandCDPtable}.
Figure~\ref{tbl:LCSsum k=1} shows the sum of the LCS of prefixes and suffixes corresponding to the gray part.
Due to Lemma~\ref{lem:calcLCSinG}, the values along the diagonal are equal.
Thus, for the combinations of minimal $C$-intervals in $G(2)$, $G'(2)$, we only need to consider
the six combinations: $([2,9],[5,9])$,$([2,9],[6,10])$,$([2,9],[7,11])$,$([3,10],[5,9])$,$([4,11],[5,9])$,$([5,12],[5,9])$.

\begin{figure}[htbp]
  \begin{center}
    \caption{An example depicting the LCSs of corresponding prefixes (left) and suffixes (right) of all combinations of $G(2)$ and $G'(2)$ for strings $\RLE(A) = \mathtt{a}^5\mathtt{b}^3\mathtt{a}^4\mathtt{b}^2\mathtt{a}^1$,
$\RLE(B) = \mathtt{a}^1\mathtt{b}^3\mathtt{a}^7\mathtt{b}^3$, and
$\RLE(C) = \mathtt{a}^5$. The values denoted inside parentheses are not stored in the cDP table, but each of them can be
computed in $O(1)$ time.}
    \label{tbl:LCS k=1}
    \begin{tabular}{cc}
      \begin{minipage}{0.5\hsize}
        \begin{center}
          \begin{tabular}{|c|c|c|ccc|cccc|} \hline
            \multicolumn{2}{|c|}{} & \multicolumn{8}{c|}{$B$} \\ \cline{3-10}
            \multicolumn{2}{|c|}{} & \tt{a} & \tt{b} & \tt{b} & \tt{b} & \tt{a} & \tt{a} & \tt{a} & $\cdots$ \\ \hline
                & \tt{a} & 1 &   &   & \tgray1 & \tgray(1) & \tgray(1) &   & \\
                & \tt{a} & 1 &   &   & \tgray1 & \tgray(2) & \tgray(2) &   & \\
                & \tt{a} & 1 &   &   & \tgray1 & \tgray(2) & \tgray(3) &   & \\
            $A$ & \tt{a} & 1 &   &   & \tgray1 & \tgray(2) & \tgray(3) &   & \\
                & \tt{a} & 1 & 1 & 1 & 1 & 2 & 3 & 4 & $\cdots$ \\ \cline{2-10}
                & \tt{b} & 1 &   &   & 2 &   &   &   & \\
                & $\vdots$ & $\vdots$ &   &   & $\vdots$ &   &   &   & \\ \hline
          \end{tabular}
        \end{center}
      \end{minipage}
      \begin{minipage}{0.5\hsize}
        \begin{center}
          \begin{tabular}{|c|c|cccc|ccc|} \hline
            \multicolumn{2}{|c|}{} & \multicolumn{7}{c|}{$B$} \\ \cline{3-9}
            \multicolumn{2}{|c|}{} & $\cdots$ & \tt{a} & \tt{a} & \tt{a} & \tt{b} & \tt{b} & \tt{b} \\ \hline
                & $\vdots$ &   &   &   &   & $\vdots$ &   & \\
                & \tt{a} &   &   & \tgray(4) & \tgray(3) & \tgray2 &   & \\
                & \tt{a} &   &   & \tgray(4) & \tgray(3) & \tgray2 &   & \\
            $A$ & \tt{a} &   &   & \tgray(3) & \tgray(3) & \tgray2 &   & \\ \cline{2-9}
                & \tt{b} & $\cdots$ & 2 & \tgray2 & \tgray2 & \tgray2 & 2 & 1 \\ 
                & \tt{b} &   &   &   &   & 1 &   & \\ \cline{2-9}
                & \tt{a} & $\cdots$ & 1 & 1 & 1 & 0 & 0 & 0 \\ \hline
          \end{tabular}
        \end{center}
      \end{minipage}
    \end{tabular}
  \end{center}
\end{figure}

\begin{figure}[htbp]
  \begin{center}
    \caption{Sum of the lengths of LCSs of corresponding prefixes and suffixes shown in Figure~\ref{tbl:LCS k=1}.
      Values along the diagonal are equal (each value is equal to the value to its upper left/lower right).}
    \label{tbl:LCSsum k=1}
    \begin{tabular}{ccc}
      \rowcolor[gray]{0.85} 5 & 4 & 3 \\
      \rowcolor[gray]{0.85} 5 & 5 & 4 \\
      \rowcolor[gray]{0.85} 4 & 5 & 5 \\
      \rowcolor[gray]{0.85} 3 & 4 & 5 \\
    \end{tabular}  
  \end{center}
\end{figure}

Now, we are ready to show the running time of our algorithm for the case $|\RLE(C)|=1$.
We can compute $\RLE(A)$, $\RLE(B)$, $\RLE(C)$ from $A,B,C$ in $O(M+N+K)$ time.
There are respectively $O(M)$ and $O(N)$ minimal $C$-intervals of $A$ and $B$, and 
each of them can be assigned to one of the $O(m)$ and $O(n)$ equivalence classes
$G, G'$, in total of $O(M+N)$ time.
The preprocessing for the cDP table is the same as for the case of $|\RLE(C)|>1$,
which can be done in $O(mN + nM)$ time.
By Lemma~\ref{lem:calcLCSinG}, we can reduce the number of combinations of 
minimal $C$-intervals to consider to $O(mN + nM)$.
Finally, from Lemma~\ref{lem:DPandCDPtable}, the LCS lengths for each combination can be
computed in $O(1)$ using the cDP table.
Therefore, the total running time is $O(mN + nM)$,
proving Theorem~\ref{thm:TimeComplexity k=1}.

From Theorems~\ref{thm:TimeComplexity k>1} and~\ref{thm:TimeComplexity k=1}, the following Theorem~\ref{thm:TimeComplexity} holds.
The pseudo-code for our proposed algorithm is shown in Algorithm~\ref{alg:STR-IC-LCS} (written in Appendix).

\begin{theorem}\label{thm:TimeComplexity}
  Let $A,B,C$ be any strings and let $|A|=M$, $|B|=N$, $|\RLE(A)|=m$ and $|\RLE(B)|=n$.
  We can compute the length of an STR-IC-LCS of $A,B,C$ in $O(mN + nM)$ time.
\end{theorem}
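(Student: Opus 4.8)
The plan is to prove the theorem by a straightforward case analysis on the structure of $\RLE(C)$, invoking the two results already established. Recall that by the standing assumptions of Section~2 we have $K > 0$, so $\RLE(C)$ consists of at least one run; hence exactly one of the conditions $|\RLE(C)| = 1$ and $|\RLE(C)| > 1$ holds. The two cases are therefore exhaustive and mutually exclusive, and each is covered by one of the two theorems proved above.

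Concretely, I would first compute $\RLE(C)$ from $C$ in $O(K)$ time and inspect its size. If $|\RLE(C)| > 1$, I apply Theorem~\ref{thm:TimeComplexity k>1} to compute the length of an STR-IC-LCS of $A, B, C$ in $O(mN + nM)$ time. If instead $|\RLE(C)| = 1$, I apply Theorem~\ref{thm:TimeComplexity k=1}, again obtaining the length in $O(mN + nM)$ time. In both branches the returned value is correct by the respective theorem, so the combined procedure is correct, and its running time is dominated by the running time of whichever sub-algorithm is invoked.

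It then remains only to check that the overhead of deciding which case applies does not dominate the claimed bound. Computing $\RLE(C)$ costs $O(K)$, and since we may assume $K \leq \min\{M, N\}$ we have $K \leq N \leq mN \leq mN + nM$, so this preprocessing is absorbed. There is no genuine obstacle here: all the algorithmic substance — the RLE-based enumeration of minimal $C$-intervals (Lemmas~\ref{lem:MinInterval k>1} and~\ref{lem:MinInterval k=1}), the compressed DP tables of Bunke and Csirik together with their constant-time random access (Lemmas~\ref{lem:CDPtable} and~\ref{lem:DPandCDPtable}), and the equivalence-class grouping that caps the number of interval pairs in the $|\RLE(C)| = 1$ case (Lemmas~\ref{lem:numofG} and~\ref{lem:calcLCSinG}) — has already been carried out inside the two sub-theorems. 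The only point requiring care is the exhaustiveness of the dichotomy, which is immediate from $K > 0$; given that, the theorem follows at once from Theorems~\ref{thm:TimeComplexity k>1} and~\ref{thm:TimeComplexity k=1}.
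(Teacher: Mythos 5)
Your proposal is correct and matches the paper exactly: the paper derives Theorem~\ref{thm:TimeComplexity} directly by combining Theorems~\ref{thm:TimeComplexity k>1} and~\ref{thm:TimeComplexity k=1}, with the dichotomy on $|\RLE(C)|$ being exhaustive since $K>0$. Your additional remark that the $O(K)$ cost of computing $\RLE(C)$ is absorbed into $O(mN+nM)$ is a harmless elaboration already implicit in the paper's accounting.
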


Although we only showed how to compute the length of an STR-IC-LCS, we note that
the algorithm can be modified so as to obtain a RLE of an STR-IC-LCS in $O(m+n)$ time,
provided that $\RLE(C)$ is precomputed, simply by
storing the minimal $C$-intervals $[s,f]$, $[s',f']$, respectively of $A$ and $B$,
that maximizes $\Lpref(s-1,s'-1) + \Lsuf(f+1,f'+1) + |C|$.
From Lemmas~\ref{lem:LCScalculation1} and~\ref{lem:LCScalculation2},
we can
simulate a standard back-tracking of the DP table for obtaining LCSs
with the cDP table to obtain RLE of the LCSs in $O(m+n)$ time.
Finally an RLE of STR-IC-LCS can be obtained by combining the three RLE strings (the two LCSs with $\RLE(C)$ in the middle),
appropriately merging the boundary runs if necessary.

\section{Conclusion}
In this work, we proposed a new algorithm to solve the STR-IC-LCS problem using RLE representation.
We can compute the length of an STR-IC-LCS of strings $A,B,C$ in $O(mN+nM)$ time and space using this algorithm, where $|A| = M$, $|B| = N$, $|RLE(A)| = m$ and $|RLE(B)| = n$.
This result is better than Deorowicz's $O(MN)$ time and space~\cite{DBLP:journals/ipl/Deorowicz12}, which doesn't use RLE.
If we want to know not only the length but also an STR-IC-LCS of $A,B,C$, we can retrieve it in $O(m+n)$ time.

\clearpage
\bibliography{ref}
\bibliographystyle{plain}

\clearpage
\appendix
\section{Appendix}
Here, we show the pseudo-code for our proposed algorithm.

\begin{algorithm2e}
  \caption{Proposed $O(mN+Mn)$ time algorithm for STR-IC-LCS}
  \label{alg:STR-IC-LCS}
  \KwIn{strings $A$, $B$ and $C$}
  \KwOut{length of an STR-IC-LCS of $A,B,C$}

  \tcp{$[s_x,f_x]$ : a minimal $C$-interval in $A$}
  \tcp{$[s'_y,f'_y]$ : a minimal $C$-interval in $B$}
  \tcp{$l,l'$ : number of minimal $C$-intervals in $A,B$ respectively}
  \tcp{$\Gmin(h),\Gmin'(h')$ : minimum element in $G(h),G'(h')$ respectively}
  \tcp{$g,g'$ : number of sets $G,G'$ respectively}
  
  Make compressed DP tables of $A$ and $B$.\;

  \uIf{$|\RLE(C)| > 1$}{
    Compute all minimal $C$-intervals $[s_1,f_1], \ldots, [s_l,f_l]$ of $A$
    and $[s'_1,f'_1], \ldots, [s'_{l'},f'_{l'}]$ of $B$.
    (use Algorithm~\ref{alg:minCinterval})\;
    
    $\Lmax \gets 0$\;
    \For{$x=1$ {\bf to} $l$}{
      \For{$y=1$ {\bf to} $l'$}{
        $\Lsum \gets \Lpref(s_x-1,s'_y-1) + \Lsuf(f_x+1,f'_y+1)$\;
        \lIf{$\Lmax < \Lsum$}{
            $\Lmax \gets \Lsum$\;
        }
      }
    }
  }
  \Else{
    $l \gets 1-K$;
    $g \gets 1$;
    $\Gmin(1) \gets 1$\;
    \For{$p=1$ {\bf to} $m$}{
      \If{$a_p = C[1]$}{
        \For{$p'=1$ {\bf to} $M_p$}{
          $l \gets l+1$;
          $s_{l+K} \gets M_{1..p}+p'$\;
          \lIf{$l \geq 1$}{
            $f_l \gets M_{1..p}+p'$\;
          }
          \If{$l \geq 2$}{
            \lIf{$s_{l-1}+1 \neq s_l$ or $f_{l-1}+1 \neq f_l$}{
              $g \gets g+1$;
              $\Gmin(g) \gets l$\;
            }
          }
        }
      }
    }
    
    $l' \gets 1-K$;
    $g' \gets 1$;
    $\Gmin'(1) \gets 1$\;
    \For{$q=1$ {\bf to} $n$}{
      \If{$b_q = C[1]$}{
        \For{$q'=1$ {\bf to} $N_q$}{
          $l' \gets l'+1$;
          $s'_{l'+K} \gets N_{1..q}+q'$\;
          \lIf{$l' \geq 1$}{
            $f'_{l'} \gets N_{1..q}+q'$\;
          }
          \If{$l' \geq 2$}{
            \lIf{$s'_{l'-1}+1 \neq s'_{l'}$ or $f'_{l'-1}+1 \neq f'_{l'}$}{
              $g' \gets g'+1$;
              $\Gmin'(g') \gets l'$\;
            }
          }
        }
      }
    }
    $\Gmin(g+1) \gets l+1$;
    $\Gmin'(g'+1) \gets l'+1$\;
    
    $\Lmax \gets 0$\;
    \For{$h=1$ {\bf to} $g$}{
      \For{$h'=1$ {\bf to} $g'$}{
        \For{$x=\Gmin(h)$ {\bf to} $\Gmin(h+1)-1$}{
          $\Lsum \gets \Lpref(s_x-1,s'_{\Gmin'(h')}-1) + \Lsuf(f_x+1,f'_{\Gmin'(h')}+1)$\;
          \lIf{$\Lmax < \Lsum$}{
            $\Lmax \gets \Lsum$\;
          }
        }
        \For{$y=\Gmin'(h')$ {\bf to} $\Gmin'(h'+1)-1$}{
          $\Lsum \gets \Lpref(s_{\Gmin(h)}-1,s'_y-1) + \Lsuf(f_{\Gmin(h)}+1,f'_y+1)$\;
          \lIf{$\Lmax < \Lsum$}{
            $\Lmax \gets \Lsum$\;
          }
        }
      }
    }
  }
  \Return{$\Lmax + K$}\;
\end{algorithm2e}

\end{document}